\theoremstyle{definition}
\newtheorem{theorem}{Theorem}
\newtheorem{definition}{Definition}
\newtheorem{example}{Example}
\newcommand{\cites}[1]{\citeauthor{#1}'s (\citeyear{#1})}
\newcommand{\cp}{s}
\newcommand{\CP}{S}
\begin{document}

\title{Identifying Present-Biased Discount Functions in Dynamic Discrete Choice Models\protect\thanks{Thanks to Hanming Fang for helpful discussions.}}

\author{Jaap H. Abbring\thanks{Department of Econometrics \& OR, Tilburg University, P.O. Box
90153, 5000 LE Tilburg, The Netherlands. E-mail: \href{mailto:jaap@abbring.org}{jaap@abbring.org}. Web: \href{http://jaap.abbring.org}{jaap.abbring.org}}\and
\O ystein Daljord\thanks{Booth School of Business, The University of Chicago.}
\and
Fedor Iskhakov\thanks{1018 HW Arndt building
The Australian National University, Canberra, ACT 0200, Australia E-mail:
\href{mailto:fedor.iskhakov@anu.edu.au}{fedor.iskhakov@anu.edu.au}. Web: \href{http://fedor.iskh.me/}{http://fedor.iskh.me/}
}}

\date{May 27, 2020\thanks{\O ystein \href{https://www.chicagobooth.edu/why-booth/stories/remembering-oystein-daljord}{passed away on June 13, 2020}. This is the version of our paper in \O ystein's last (May 27, 2020) commit to our project's Github repository. We only edited it to include these comments and to suppress Section \ref{s:margarine}'s empirical results, which were computed by \O ystein on a processed version of the Kilts-Nielsen data that we currently (July 2025) cannot access and that have not been cleared for public dissemination. This version was presented at various seminars and conferences, including the 2020 World Congress of the Econometric Society. An earlier (August 2019) draft appeared on \O ystein's faculty page at Booth. 
\newline
{\em Keywords:} discount factor, dynamic discrete choice,  identification, present bias, quasi-hyperbolic time preferences.
\newline {\em JEL codes:} C25, C61.
}}
\maketitle
\abstract{
We study the identification of dynamic discrete choice models with sophisticated, quasi-hyperbolic time preferences  under exclusion restrictions. We consider both standard finite horizon problems and empirically useful infinite horizon ones, which we prove to always have solutions. We reduce identification to finding the present-bias and standard discount factors that solve a system of polynomial equations with coefficients determined by the data and use this to bound the cardinality of the identified set. The discount factors are usually identified, but hard to precisely estimate, because exclusion restrictions do not capture the defining feature of present bias, preference reversals, well.
}

\thispagestyle{empty}
\clearpage

\section{Introduction}

The standard experimental approach to measure dynamically-consistent time preferences identifies them from choice responses to variation in future values, holding current payoffs fixed \citep[see][for surveys]{jel02:fredericketal,urminskyzauberman15}. Observational studies using dynamic discrete choice models with dynamically-consistent preferences have invoked a similar intuition \citep[e.g.][]{yaoetal12,Lee2013,chungetal14, bollinger15,aer19:degrooteverboven}. \cite{qe20:abbringdaljord} formalized this intuition by showing that, in such models, exclusion restrictions on utility set identify the geometric discount factor and, nonparametrically, the utility function. We extend \citeauthor{qe20:abbringdaljord}'s approach to dynamic discrete choice models with sophisticated quasi-hyperbolic discounting, which is represented by a {\em present-bias factor} $\beta$ and a {\em standard discount factor} $\delta$. \cite{ier15:fangwang} introduced such a model, with more general partially-naive quasi-hyperbolic discounting, and used it, under exclusion restrictions on utility, to study present bias in mammography decisions. \cite{selfcontrolchan17} and \cite{ucb19:mahajanetal} used similar models and exclusion restrictions to analyze dynamically-inconsistent time preferences in welfare benefit choices and the demand for insecticide-treated nets. 

We first study a finite-horizon model, as in much of the literature \citep[including, e.g.][]{ucb19:mahajanetal}. For this model, a standard backward recursion argument establishes that a unique solution (up to the resolution of ties) exists. We then extend our analysis to stationary, infinite horizon models, as in \citet{ier15:fangwang}, which are useful in applications that have no clear decision horizon. We use the specific econometric structure on our infinite-horizon problem to prove that it always admits a pure-strategy solution. We show that the identification analysis of both models can be reduced to finding the discount factors $(\beta,\delta)$ that solve a system of polynomial equations. This system has one economically-interpretable equation for each exclusion restriction, with coefficients determined by the choice and transition probabilities that can be directly estimated from the data. We first focus on the case in which we have two exclusion restrictions, which give two equations to identify $\beta$ and $\delta$. We show that, in this bivariate case, the number of discount factors $(\beta,\delta)$ in the identified set is finite and bounded above by known features of the data, notably the time horizon and the number of states. In turn, each pair $(\beta,\delta)$ of discount factors in the identified set corresponds to a unique (nonparametric) utility function that rationalizes the data. 
 
Our approach leverages the assumption that agents are sophisticated and rationally foresee that they will be present biased in the future. It does not readily extend to the (partially) naive case.  \citet{ucb19:mahajanetal} showed point identification for partially-naive time preferences in a three-period model under a particular set of exclusion restrictions. \citet{daljordetal19} established point identification of a fully nonparametric, time-separable discount function in a  terminal action problem.  We demonstrate that, for a more general set of exclusion restrictions than \citeauthor{ucb19:mahajanetal}'s and more general choice structures and time horizons, but sophisticated time preferences,  the number of discount functions in the identified set depends on the number of choice periods (with finite horizons) or states (with infinite horizons).\footnote{\citeauthor{ier15:fangwang} proposed a proof of identification of partially-naive quasi-hyperbolic time preferences under similar exclusion restrictions to the ones we consider in this paper. \cite{ier20:abbringdaljord} showed that \cite{ier15:fangwang}'s main identification claim is void--- that  it has no implications for identification of the dynamic discrete choice model--- and that its main proof of identification is incorrect and incomplete. \citeauthor{selfcontrolchan17} builds on \citeauthor{ier15:fangwang}'s intuition. We emphasize that we do not believe that the incorrect results in \citeauthor{ier15:fangwang} invalidates the results in \citeauthor{selfcontrolchan17}. On the contrary, we think our results confirm that the model in \citeauthor{selfcontrolchan17}, and possibly also the one in \citeauthor{ier15:fangwang}, are formally identified.}  
Along the way, we provide a way to concentrate the empirical analysis of such models with nonparametric utility on the present-bias and standard discount factors.

After showing that the quasi-hyperbolic discount function parameters $\beta$ and $\delta$ are formally identified, we note that though their product $\beta \delta$ can be precisely estimated in finite samples, the parameters are likely to be estimated individually at comparably low levels of precision. We show that the poor finite-sample performance follows from the way the parameters $\beta$ and $\delta$ enter largely interchangeably in the identifying moment conditions. We illustrate this point with simulations of a simple three-period model with two exclusion restrictions, which we know to be point identified. We then demonstrate that it also holds in an infinite horizon application to the demand for margarine, which comes with many natural exclusion restrictions.

Our results suggest to more closely follow the experimental literature and develop an identification strategy explicitly around the concept of preference reversals.  In \cites{thaler81} classic example, subjects who prefer one apple today to two apples tomorrow tend to prefer two apples one year and one day from now to one apple one year from now. Such preference reversals are the defining feature of present-biased time preferences. In empirical work, demand for commitment devices has been viewed as a strategic response to anticipated preference reversals and has been taken as evidence of sophisticated present-bias \citep[e.g.][]{malmendierdellavigna06}. A sophisticated agent may want to lock in her savings to avoid excessive spending by future present-biased selves, that is, to restrict her future choice sets without receiving a current period pay-off. Demonstrated willingness to restrict one's future choice set may be a more promising approach. It has yet not, to our knowledge, been used formally as part of an identification strategy for dynamic discrete choice models.

\section{Model}
\label{sec:model}

We first study a finite horizon dynamic discrete choice model in which agents may suffer from sophisticated present-bias. This model is similar to \cites{ier15:fangwang}, but is nonstationary, and does not allow for partial naivity. In Section \ref{sec:infinitehorizon}, we study a stationary, infinite horizon version like \citeauthor{ier15:fangwang}'s.

\subsection{Primitives}

Time is indexed by $t = 1, \hdots, T$; with $T < \infty$.  In each period $t$, the agent chooses an action $d_t$ from a finite set $\mathcal{D} = \{1, \hdots, K\}$. Prior to making this choice, the agent draws and observes vectors of state variables $x_t$ and $\epsilon_t = \{\epsilon_{1,t}, \hdots, \epsilon_{K,t}\}$. The observable (to the econometrician) states $x_t$ have finite support $\mathcal{X}$ and evolve as a controlled (by $d_t$) first order Markov process. For notational simplicity only, we take this process to be stationary, with Markov transition distribution $Q_k$ if $k\in{\cal D}$ is chosen. The utility shocks $\epsilon_{k,t}$ are independent from $x_t$ and prior states and choices, over time, and across choices, and have type 1 extreme value distributions.\footnote{Our analysis straightforwardly extends to the case in which the vectors $\epsilon_t$ are independent over time, with {\em known} continuous distributions $G_t$ on a common support $\mathbb{R}^K$. Note that the exact choice of $G_t$, for $t=1,\ldots,T$, does not impose testable restrictions on the type of data that we assume are available in this paper.}

If, in period $t$ and state $x$, the agent chooses $k$, she collects a flow of utility $u_{k,t}(x) + \epsilon_{k,t}$. We normalize $u_{K,t}(x) = 0$ for all $t \in 1, \hdots, T$ and $x \in \mathcal{X}$. This normalization is substantive, but is standard in the literature and cannot be rejected by the type of observational data on choices and states that we will assume available in this paper.\footnote{The way $u_{K,t}$ is normalized affects the model's implied behavioural responses to many, but not all, counterfactual interventions \citep[e.g.][]{res14:noretstang,qme14:aguirregabirisuzuki,qe18:kalouptsidietal}.} 

The agent's discount function has two parameters: a non-negative and finite \emph{standard discount factor} $\delta$ and a \emph{present-bias parameter} $\beta \in (0,1]$.  Since the horizon is finite, we do not require that the discount factor $\delta$ is smaller than one. If $\beta = 1$, the model reduces to one with standard geometric discounting. The present-bias parameter is bounded away from zero to distinguish present-bias from myopia.

\subsection{Choices}

Choices in dynamic discrete choice models are regulated by value functions. Since present-biased time preferences are time inconsistent, these value functions do not follow from a standard dynamic program. It is common to think about the values as summarizing the pay-offs to players in a Stackelberg-like game played between selves in different time periods  \citep[e.g.][]{elster85}. 

Let $\tilde{\sigma}_t: \mathcal{X}\times \mathbb{R}^K \rightarrow \mathcal{D}$ be an arbitrary choice strategy and $\mathbf{\tilde{\sigma}}_t = \{\tilde{\sigma}_\tau\}^T_{\tau = t}$ an arbitrary strategy profile.  The agent's \emph{current choice specific value function}, which regulates the choices, is
\begin{align}
\label{eq:w}
w_{k,t}(x;\tilde{\mathbf{\sigma}}_{t+1}) & = u_{k,t}(x) + \beta \delta \int v_{t+1}(x';\tilde{\mathbf{\sigma}}_{t+1})dQ_k(x'|x)
\end{align}
for $t<T$, with terminal value $w_{k,T}(x)  = u_{k,T}(x)$. The agent trades off current utility versus future values by factor $\beta \delta$, but the stream of all future utilities are discounted geometrically by factor $\delta$ according to the \emph{perceived long run value function}, which equals 
\begin{align}\label{eq:perceivedvalue}
v_{t+1}(x;\tilde{\mathbf{\sigma}}_{t+1}) = \mathbb{E}_{\epsilon_{t+1}}&\bigg[u_{\tilde{\sigma}_{t+1}(x, \epsilon_{t+1}),t+1}(x) + 
 \epsilon_{\tilde{\sigma}_{t+1}(x, \epsilon_{t+1}),t+1}\nonumber\\ &+ \delta \int v_{t+2}(x';\tilde{\mathbf{\sigma}}_{t+2})dQ_{\tilde{\sigma}_{t+1}(x, \epsilon_{t+1})}(x'|x) \bigg]
\end{align}
for $t+1<T$, with terminal value $v_{T}(x;\tilde{\mathbf{\sigma}}_{T}) = \mathbb{E}_{\epsilon_T}\left[u_{\tilde{\sigma}_{T}(x, \epsilon_T),T}(x) +  \epsilon_{\tilde{\sigma}_{T}(x, \epsilon_T),T}\right]$. 

The perceived long run value depends on the current self's perceptions of its future selves' strategies $\tilde{\sigma}_{t+1}$.  At the time of decision, each of these future selves have present-biased preferences which are in conflict with the current self's time consistent long run time preferences. 

Since the agent is sophisticated, her perceptions of her future strategies are correct in equilibrium. Thus, in a sophisticated intrapersonal equilibrium, her selves use a \emph{perception perfect strategy}  \citep{odonoghuerabin99}, which is a strategy profile $\mathbf{\sigma}^*_1$ such that each $\sigma^*_t$ is a best response to her perceived future strategy profile ${\mathbf{\sigma}}^*_{t+1}$:
\begin{align}
\label{eq:ppe}
\sigma^*_t(x, \epsilon_t) & \in  \arg \max_{k \in \mathcal{D}} \{w_{k,t}(x;{\mathbf{\sigma}}^*_{t+1}) + \epsilon_{k,t} \}.
\end{align}
Here, $w_{k,T}(x;{\mathbf{\sigma}}^*_{T+1})$ should be read as $w_{k,T}(x)$.

It is easy to show, by backward induction from time $T$, that a perception perfect strategy exists and is unique, up to the resolution of ties in the decision in \eqref{eq:ppe}. Because $\epsilon_t$ is continuously distributed, the implied equilibrium probabilities 
\begin{align}
\label{eq:ccp}
\cp_{k,t}(x;\mathbf{\sigma}^*_t)&= \mathbb{E}_{\epsilon_t} [\mathbbm{1}\{\sigma^*_t(x, \epsilon_t) = k\}]
\end{align}
that the agent chooses $k\in{\cal D}$ in state $x\in{\cal X}$ are unique.

\section{Identification}\label{sec:identification}

Suppose that the data provide the state transition probabilities $Q_1,\ldots,Q_K$; and the conditional choice probabilities $\cp_{k,t}(x)=\Pr(d_t=k | x_t=x)$ for all $k\in{\cal D}$; $t=1,.\ldots,T$; and $x\in{\cal X}$.\footnote{\label{fn:dgp} In empirical applications, one typically observes draws from the joint process $\{(x_t,d_t); t=1,\ldots,T\}$. Under our model's assumptions, this process is first order Markovian, with a transition distribution that is fully characterized by the state transition and conditional choice probabilities. Of course, this Markov property can be tested and its rejection may point to, for example, persistent unobserved heterogeneity. Such unobserved heterogeneity can be handled in the usual way. Here, we concentrate on what can be learned from dynamic discrete choice data once individual state transition and conditional choice probabilities are identified.} This section studies the extent to which these data uniquely determine the model primitives ${Q}_1,\ldots,{Q}_K$; $\beta$; $\delta$; and $u_{1,t},\ldots,u_{K-1,t}$;  $t=1,\ldots,T$.  Clearly, the observed state transitions directly identify ${Q}_k$ and thus, because they were assumed rational, the agent's expectations. We therefore focus on the identification of the utility functions $u_{k,t}$ and the discount parameters $\beta$ and  $\delta$ from the conditional choice probabilities for given $Q_1,\ldots,Q_K$. To this end, we assume that the observed conditional choice probabilities are generated by a perception perfect strategy of the model:
\begin{equation}
\label{eq:p}
\cp_{k,t}(x)=\cp_{k,t}(x;\mathbf{\sigma}^*_t)\text{ for all }k\in{\cal D};~ t=1,.\ldots,T;\text{ and }x\in{\cal X}.
\end{equation}

\subsection{Basic Results}

The choice probabilities only depend on the primitives through the value contrasts $w_{k,t}(x;\mathbf{\sigma}^*_{t+1}) - w_{K,t}(x;\mathbf{\sigma}^*_{t+1})$. In particular, \eqref{eq:p} implies that\footnote{The functional form of the mapping between value contrasts and choice probabilities is specific to the assumption that the $\epsilon_{k,t}$ have independent type 1 extreme value distributions, but the results given here extend to general known $G_t$.}
\begin{align}\label{eq:HM}
\ln\left(\frac{\cp_{k,t}(x)}{\cp_{K,t}(x)}\right) & = w_{k,t}(x;\mathbf{\sigma}^*_{t+1}) - w_{K,t}(x;\mathbf{\sigma}^*_{t+1})
\end{align}
for all $k\in D/\{K\}$;  $t = 1, \hdots, T$; and $x\in{\cal X}$. With the restriction that the choice probabilities add up to one over choices,  \eqref{eq:HM} gives 
\begin{align}\label{eq:preMcFaddenSurplus}
-\ln (\cp_{K,t}(x))=\ln\left(\sum_{k\in{\cal D}}\exp\left[w_{k,t}(x;\mathbf{\sigma}^*_{t+1}) - w_{K,t}(x;\mathbf{\sigma}^*_{t+1})\right]\right).
\end{align}
With $-\ln (\cp_{K,t}(x))$ in hand, \eqref{eq:HM} determines $\cp_{k,t}(x)$ from the value contrasts. 

Conversely, as in the case without present-bias \citep{res93:hotzmiller}, using \eqref{eq:HM}, the current choice specific value contrasts can be uniquely recovered from the observed choice probabilities. Altogether, this implies that we can focus our identification analysis on the question to what extent the discount parameters and utilities are uniquely determined from the value contrasts $w_{k,t}(x;\mathbf{\sigma}^*_{t+1}) - w_{K,t}(x;\mathbf{\sigma}^*_{t+1})$, for given $Q_k$.

It is well known that the dynamic discrete choice model with geometric discounting ($\beta=1$) is not identified (\citealp{nh94:rust}, Lemma 3.3, and  \citealp{ecta02:magnacthesmar}, Proposition 2). The underidentification carries over to its generalization with present-bias. Specifically, the following version of \cites{ecta02:magnacthesmar} Proposition 2 holds.
\begin{theorem}[{\bf Nonidentification}] 
\label{th:nonident}
For given ${Q}_1,\ldots,{Q}_K$; $\beta$; $\delta$; and $\cp_{k,t}(x)$; $k\in{\cal D}$; $t=1,.\ldots,T$; and $x\in{\cal X}$; there exists unique utility functions $u_{1,t},\ldots,u_{K-1,t}$; $t=1,\ldots,T$; such that \eqref{eq:w}--\eqref{eq:p} hold.
\end{theorem}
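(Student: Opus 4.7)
The plan is to establish the claim by backward induction on $t$, starting from the terminal period $T$ and working down to $t=1$, by leveraging the fact (already noted in the text just before the theorem) that \eqref{eq:HM} uniquely recovers each value contrast $\Delta w_{k,t}(x):=w_{k,t}(x;\mathbf{\sigma}^*_{t+1})-w_{K,t}(x;\mathbf{\sigma}^*_{t+1})=\ln(\cp_{k,t}(x)/\cp_{K,t}(x))$ from the data. At each step I would read $u_{k,t}$ off the defining equation \eqref{eq:w}, having already pinned down everything else on its right-hand side.

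For the base case $t=T$, since $w_{k,T}(x)=u_{k,T}(x)$ and the normalization $u_{K,T}(x)=0$, the value contrast itself is the utility: $u_{k,T}(x)=\Delta w_{k,T}(x)$, which exists and is unique. For the inductive step at $t<T$, I would assume that $u_{k,\tau}$ has been uniquely determined for all $\tau>t$ and all $k\in{\cal D}\setminus\{K\}$, and that the perceived long-run value $v_{t+1}(x;\mathbf{\sigma}^*_{t+1})$ can be computed from these, from $\delta$, from the transitions $Q_1,\ldots,Q_K$, and from the choice probabilities $\cp_{k,\tau}$ for $\tau\geq t+1$. Then rearranging \eqref{eq:w} under the normalization $u_{K,t}(x)=0$ gives
\begin{align*}
u_{k,t}(x) = \Delta w_{k,t}(x) - \beta\delta\int v_{t+1}(x';\mathbf{\sigma}^*_{t+1})\bigl[dQ_k(x'|x)-dQ_K(x'|x)\bigr],
\end{align*}
so $u_{k,t}(x)$ is uniquely determined; existence follows because these utilities, by construction, make \eqref{eq:w}--\eqref{eq:p} hold when paired with the given choice probabilities and value contrasts.

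The one step that requires a small calculation, rather than pure rearrangement, is showing that $v_{t+1}(x;\mathbf{\sigma}^*_{t+1})$ is indeed computable from the already-identified objects. Using the type 1 extreme value assumption, the expectation in \eqref{eq:perceivedvalue} admits the standard closed form $\sum_{k}\cp_{k,t+1}(x)\bigl[u_{k,t+1}(x)+\delta\int v_{t+2}(x';\mathbf{\sigma}^*_{t+2})dQ_k(x'|x)\bigr]+\gamma-\sum_k \cp_{k,t+1}(x)\ln \cp_{k,t+1}(x)$, where $\gamma$ is Euler's constant. This recursion, initialized at $T$ by the corresponding closed form for $v_T$, delivers $v_{t+1}$ as a function of the data and of the utilities $u_{k,\tau}$, $\tau\geq t+1$, that have already been uniquely fixed by the induction hypothesis.

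I do not expect a major obstacle: the argument is essentially the present-bias analog of \citeauthor{ecta02:magnacthesmar}'s proof, with the only novelty being that the backward recursion for the long-run value $v$ must be carried out separately from the Stackelberg-style current-self value $w$ (because $v$ discounts by $\delta$ throughout, whereas $w$ discounts by $\beta\delta$ between period $t$ and $t+1$). The extra factor $\beta$ shifts the mapping from value contrasts to utilities but does not affect the logic of the induction, since $\beta$ is taken as given.
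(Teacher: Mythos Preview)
Your proposal is correct and follows essentially the same backward-induction strategy as the paper: recover the value contrasts from \eqref{eq:HM}, read off $u_{k,t}$ from \eqref{eq:w} under the normalization $u_{K,t}=0$, and carry the long-run value $v_{t+1}$ backward. The only cosmetic difference is that where the paper tersely says ``$\sigma^*_t$ follows from \eqref{eq:ppe}'' and ``$v_t$ follows from \eqref{eq:perceivedvalue},'' you instead write out an explicit closed form for $v_{t+1}$ using the conditional-expectation identity $E[\epsilon_k\mid k\text{ chosen}]=\gamma-\ln \cp_{k,t+1}(x)$; both routes pin down the same object.
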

\begin{proof}
Using \eqref{eq:HM}, $\cp_{k,T}$, $k\in{\cal D}$, gives the unique $w_{k,T}-w_{K,T}$, $k\in{\cal D}$, that are consistent with \eqref{eq:p}. Using the terminal condition of \eqref{eq:w} and the normalization $u_{K,T}=0$, this gives $w_{k,T}=u_{k,T}$, $k\in{\cal D}$. The strategy $\sigma^*_T$ follows up to $\epsilon$-almost sure equivalence from \eqref{eq:ppe}. Finally, $v_T$ follows from \eqref{eq:perceivedvalue}.

Next, iterate the following argument for $t=T-1,\ldots,1$. Suppose that we have constructed unique $u_{k,t+1}$, $k\in{\cal D}$, unique $v_{t+1}$, and unique (up to $\epsilon$-almost sure equivalence) $\mathbf{\sigma}^*_{t+1}=(\sigma^*_{t+1},\ldots,\sigma^*_T)$ consistent with  \eqref{eq:w}, \eqref{eq:perceivedvalue}, \eqref{eq:ppe}, and \eqref{eq:p} and the choice probabilities. For each $x\in{\cal X}$, using \eqref{eq:HM}, $\cp_{k,t}(x)$, $k\in{\cal D}$, gives the unique $w_{k,t}(x;\mathbf{\sigma}^*_{t+1})-w_{K,t}(x;\mathbf{\sigma}^*_{t+1})$, $k\in{\cal D}$, that are consistent with \eqref{eq:p}. Using \eqref{eq:w},  
\begin{align}\label{eq:Dw}
&w_{k,t}(x;\mathbf{\sigma}^*_{t+1})-w_{K,t}(x;\mathbf{\sigma}^*_{t+1})\\
&= u_{k,t}(x) - u_{K,t}(x)+ \beta \delta \int v_{t+1}(x';\mathbf{\sigma}^*_{t+1})\left[dQ_k(x'|x)-dQ_K(x'|x)\right],~k\in{\cal D}.\nonumber
\end{align}
Because the last term in the right hand side of \eqref{eq:Dw} is known at this point and $u_{K,t}(x)$ is normalized to zero, this determines $u_{k,t}$, $k\in{\cal D}$.  The strategy $\sigma^*_t$ follows up to $\epsilon$-almost sure equivalence from \eqref{eq:ppe}. Finally, $v_t$ follows from \eqref{eq:perceivedvalue}.
\end{proof}

Theorem \ref{th:nonident} implies that $\beta$ and $\delta$ can only be identified if further data are available or additional assumptions are made. In this paper, we explore identification under exclusion restrictions on the utility functions. Our analysis focuses on the identification of $\beta$ and $\delta$. Theorem \ref{th:nonident} shows that, once $\beta$ and $\delta$ are identified, unique utility functions can be found that rationalize the choice data. 

\subsection{Concentrating identification on the discount factors}

Because ${\cal X}$ is finite--- say it has $J$ elements--- it is convenient to express expectations in matrix notation. To this end, let $\mathbf{v}_t(\mathbf{\sigma}^*_t)$ be a $J\times 1$ vector that stacks the values of $v_t(x;\mathbf{\sigma}^*_t)$, $x\in{\cal X}$, and $\mathbf{Q}_k(x)$ a $1\times J$ vector that stacks the values of $Q_k(x'|x)$, $x'\in{\cal X}$, in corresponding order. 
Then, \eqref{eq:HM} and \eqref{eq:Dw}, with the normalization $u_{K,t}(x)=0$, give
\begin{align}
\label{eq:momentraw}
\ln\left(\frac{\cp_{k,t}(x)}{\cp_{K,t}(x)}\right)  = u_{k,t}(x) + \beta \delta\left[\mathbf{Q}_k(x)-\mathbf{Q}_K(x)\right]\mathbf{v}_{t+1}(\mathbf{\sigma}^*_{t+1}).
\end{align} 
Recall that, given the transition distributions $Q_k$,  \eqref{eq:momentraw} contains all information in the choice probabilities about the model's primitives. 

We will concentrate the identification analysis on the discount factors by controlling the current period utility $u_{k,t}(x)$ in the right hand side of \eqref{eq:Dw} with exclusion restrictions and expressing the continuation value in terms of the discount factors and data only. As $\mathbf{Q}_k(x)$ and $\mathbf{Q}_K(x)$ are data, this only requires that we express the perceived long run values $\mathbf{v}_{t+1}(\mathbf{\sigma}^*_{t+1})$ in terms of the discount factors and data. To this end, first substitute \eqref{eq:w} and \eqref{eq:ppe} into \eqref{eq:perceivedvalue} to get
\begin{align}
\label{eq:vw}
&v_{t+1}(x;\mathbf{\sigma}^*_{t+1})\\
&= \mathbb{E}_{\epsilon_{t+1}} \left[\max_{k\in{\cal D}}\left\{w_{k,t+1}(x;\mathbf{\sigma}^*_{t+2}) + \epsilon_{j,t+1}\right\}+\delta(1-\beta) \mathbf{Q}_{\sigma^*_{t+1}(x, \epsilon_{t+1})}(x)\mathbf{v}_{t+2}(\mathbf{\sigma}^*_{t+2})\right]. \nonumber
\end{align}
Next, as we can express the value contrast $w_{k,t+1}-w_{K,t+1}$ in terms of data using \eqref{eq:HM}, we substract $w_{K,t+1}(x;\mathbf{\sigma}^*_{t+2})$ from the first term in the right hand side of \eqref{eq:vw} and add it to the second term, which gives
\begin{flalign}\label{eq:valuetodata}
v_{t+1}&(x;\mathbf{\sigma}^*_{t+1})\\
&= m_{t+1}(x) +  w_{K,t+1}(x;\mathbf{\sigma}^*_{t+2})  + \delta(1-\beta) \mathbb{E}_{\epsilon_{t+1}} \left[\mathbf{Q}_{\sigma^*_{t+1}(x, \epsilon_{t+1})}(x)\mathbf{v}_{t+2}(\mathbf{\sigma}^*_{t+2}) \right],\nonumber
\end{flalign}
where
\begin{align}
\label{eq:m}
m_{t+1}(x) & = \mathbb{E}_{\epsilon_{t+1}} \left[\max_{k \in \mathcal{D}} \left\{w_{k,t+1}(x;\mathbf{\sigma}^*_{t+2}) - w_{K,t+1}(x;\mathbf{\sigma}^*_{t+2}) + \epsilon_{k,t+1} \right\}\right]
\end{align}
is the McFadden surplus (before observing $\epsilon_{t+1}$) for the choice among $k\in{\cal D}$ with utilities  $w_{k,t+1}(x;\mathbf{\sigma}^*_{t+2}) - w_{K,t+1}(x;\mathbf{\sigma}^*_{t+2}) + \epsilon_{k,t+1}$. Under our assumption that $\epsilon_{t+1}$ is extreme value distributed, the right-hand side of \eqref{eq:m} reduces to the right-hand side of \eqref{eq:preMcFaddenSurplus}, so that $m_{t+1}(x) =  -\ln(\cp_{K,t+1}(x))$ is known from the choice data.\footnote{More generally, given $G$, $m_{t+1}(x)$ is a known function of $w_{k,t+1}(x;\mathbf{\sigma}^*_{t+2}) - w_{K,t+1}(x;\mathbf{\sigma}^*_{t+2})$, $k \in{\cal D}$, and thus, using \eqref{eq:HM}, of the choice probabilities \citep{arcidiaconomiller11}.}
The term $w_{K,t+1}(x;\mathbf{\sigma}^*_{t+2})$ can be expressed recursively as
\begin{align}\label{eq:wK}
w_{K,t+1}(x;\mathbf{\sigma}^*_{t+2}) & = \beta \delta \mathbf{Q}_{K}(x) \mathbf{v}_{t+2}(\mathbf{\sigma}^*_{t+2}).
\end{align}
Finally, as the expectation over $\epsilon_{t+1}$ in the right hand side of \eqref{eq:valuetodata} is effectively an expectation over implied actions $\sigma^*_{t+1}(x, \epsilon_{t+1})$, it can be expressed in terms of the observed choice probabilities using \eqref{eq:ppe}:
\begin{equation}
\label{eq:expectation}
\mathbb{E}_{\epsilon_{t+1}} \left[\mathbf{Q}_{\sigma^*_{t+1}(x, \epsilon_{t+1})}(x)\mathbf{v}_{t+2}(\mathbf{\sigma}^*_{t+2}) \right]  = 
\sum_{k\in \mathcal{D}}\cp_{k,t+1}(x)\mathbf{Q}_k(x)\mathbf{v}_{t+2}(\mathbf{\sigma}^*_{t+2}).
\end{equation}
Substituting \eqref{eq:wK} and \eqref{eq:expectation} into \eqref{eq:valuetodata} gives
\begin{align}
\label{eq:vx}
v_{t+1}(x;\mathbf{\sigma}^*_{t+1}) & = m_{t+1}(x) + \delta \left[\beta \mathbf{Q}_K(x)+ (1-\beta)\overline{\mathbf{Q}}_{t+1}(x)\right] \mathbf{v}_{t+2}(\mathbf{\sigma}^*_{t+2})
\end{align}
where $\overline{\mathbf{Q}}_{t+1}(x) = \sum_{k\in \mathcal{D}}\cp_{k,t+1}(x)\mathbf{Q}_k(x)$ is the expected state transition probability distribution under strategy $\sigma^*_{t+1}$ in state $x$. This mixture represents an expectation over how the choices of present-biased future selves control future state transitions, choices which are in conflict with the current self's long term preferences.  

Define the $J\times J$ matrix of probability mixtures
\begin{align}\label{eq:qpb}
\mathbf{Q}^{pb}_t(\beta) = \beta \mathbf{Q}_K + (1-\beta)\overline{\mathbf{Q}}_t,
\end{align}
where $\overline{\mathbf{Q}}_t$ stacks $\overline{\mathbf{Q}}_t(x)$ and $\mathbf{Q}_K$ stacks $\mathbf{Q}_K(x)$. Then, we can write \eqref{eq:vx} as a recursive expression for $\mathbf{v}_{t+1}(\mathbf{\sigma}^*_{t+1})$ in vector notation:
\begin{align*}
\mathbf{v}_{t+1}(\mathbf{\sigma}^*_{t+1}) & = \mathbf{m}_{t+1} + \delta \mathbf{Q}^{pb}_{t+1}(\beta) \mathbf{v}_{t+2}(\mathbf{\sigma}^*_{t+2}).
\end{align*} 
Completing the recursion until the end of time $T$ expresses  
\begin{align}\label{eq:vstar}
\mathbf{v}_{t+1}(\mathbf{\sigma}^*_{t+1})=\mathbf{m}_{t+1} + \sum^T_{\tau = t+2}\delta ^{\tau-t-1}\left(\prod^{\tau-1}_{r = t+1}\mathbf{Q}^{pb}_r(\beta)\right)\mathbf{m}_{\tau},
\end{align}
in terms of the discount factors and data only. Substituting \eqref{eq:vstar} into \eqref{eq:momentraw} gives
\begin{align}
\label{eq:momentdata}
\ln\left(\frac{\cp_{k,t}(x)}{\cp_{K,t}(x)}\right)  = &u_{k,t}(x) +\\& \beta \delta\left[\mathbf{Q}_k(x)-\mathbf{Q}_K(x)\right]\left[ \mathbf{m}_{t+1} + \sum^T_{\tau = t+2}\delta ^{\tau-t-1}\left(\prod^{\tau-1}_{r = t+1}\mathbf{Q}^{pb}_r(\beta)\right)\mathbf{m}_{\tau}\right].\nonumber
\end{align} 
The log choice probability ratio in the left hand side of  \eqref{eq:momentdata} measures the observed propensity to choose $k$ over $K$ in state $x$. The right hand side of  \eqref{eq:momentdata} explains this observed propensity by the current period's utility difference  $u_{k,t}(x)-u_{K,t}(x)=u_{k,t}(x)$ and a difference in continuation values, which is a polynomial in $\beta$ and $\delta$ with coefficients that are fully determined by the choice and transition data.  We study identification from variation in these continuation values, under exclusion restrictions on primitive utility that control the effects of variation in the current period's utility. This formalizes the common intuition that holding current period utilities constant, current choice responses to variation in future values are informative about time preferences.

\subsection{Exclusion restrictions}

Our identification argument holds for exclusion restrictions on utilities between pairs of time periods, choices, states, or any combinations of the three. To simplify the exposition, we however focus on exclusion restrictions on utilities from one given choice between pairs of states. We primarily focus on the case in which we have two such exclusion restrictions, which is the minimum needed to identify the two unknown discount factors, $\beta$ and $\delta$. In applications, intuition for exclusion restrictions would typically deliver a {\em variable} that affects continuation values, but not the current period's utility. Such a excluded variable would typically imply more than two exclusion restrictions on states, which would further restrict the identified set of discount factors.  

So, consider two exclusion restrictions on utility from choice $k\in{\cal D}/\{K\}$, indexed by $a$ and $b$. The first requires that
\begin{equation}
\label{eq:exclusionrestrictionA}
 u_{k,t_a}( x_{a,1}) = u_{k,t_a}( x_{a,2}) 
\end{equation}
at time $t_a<T-1$, for  states $x_{a,1}, x_{a,2} \in \mathcal{X}$ such that $x_{a,1}\neq x_{a,2}$. The second sets 
\begin{equation}
\label{eq:exclusionrestrictionB}
 u_{k,t_b}( x_{b,1}) =  u_{k, t_b}(x_{b,2}) 
\end{equation}
at time $t_b\leq t_a<T-1$, for  states $x_{b,1}, x_{b,2} \in \mathcal{X}$ such that $x_{b,1}\neq x_{b,2}$. Evaluating \eqref{eq:momentdata} at choice $k$ and time $t_a$, differencing between states $x_{a,1}$ and $x_{a,2}$, and using \eqref{eq:exclusionrestrictionA} gives 
\begin{align}
		\ln\left(\frac{\cp_{k,t_a}(x_{a,1})}{\cp_{K,t_a}(x_{a,1})}\right) - \ln&\left(\frac{\cp_{k,t_a}( x_{a,2})}{\cp_{K,t_a}(x_{a,2})}\right) = \nonumber \\
		 & \beta \delta \left[\mathbf{Q}_{k}(x_{a,1})-\mathbf{Q}_K(x_{a,1}) - 
		 \mathbf{Q}_{k}(x_{a,2})+\mathbf{Q}_K(x_{a,2})\right]		
		 \label{eq:systemofmomentsA}  \\
		 &~~~~~~\times
		 \left[\mathbf{m}_{t_a+1} + \sum^T_{\tau = t_a+2}\delta ^{\tau-t_a-1}\left(\Pi^{\tau-1}_{r = t_a+1}\mathbf{Q}^{pb}_r(\beta)\right)\mathbf{m}_{\tau} \right].\nonumber 
\end{align}
Similarly, \eqref{eq:momentdata}  and \eqref{eq:exclusionrestrictionB} give
\begin{align}
\ln\left(\frac{\cp_{k,t_b}(x_{b,1})}{\cp_{K,t_b}(x_{b,1})}\right) - \ln&\left(\frac{\cp_{k,t_b}(x_{b,2})}{\cp_{K,t_b}(x_{b,2})} \right)  =  \nonumber\\
		& \beta \delta \left[\mathbf{Q}_{k}(x_{b,1})-\mathbf{Q}_K(x_{b,1})   - 
		 \mathbf{Q}_{k}(x_{b,2})+\mathbf{Q}_K(x_{b,2})\right]\label{eq:systemofmomentsB}\\
		 &~~~~~~\times\left[\mathbf{m}_{t_b+1} + \sum^T_{\tau = t_b+2}\delta ^{\tau-t_b-1}\left(\Pi^{\tau-1}_{r = t_b+1}\mathbf{Q}^{pb}_r(\beta)\right)\mathbf{m}_{\tau} \right].\nonumber
\end{align}
The left hand sides of \eqref{eq:systemofmomentsA} and \eqref{eq:systemofmomentsB} are scalars that are known from the data. They reflect choice differences between the states that figure in each exclusion restrictions and that are known from the data. The right hand sides of \eqref{eq:systemofmomentsA} and \eqref{eq:systemofmomentsB} are the model's implications for these same choice responses at discount factors $\beta$ and $\delta$. They are bivariate polynomials in $\beta$ and $\delta$ of order $T-t_a$ and $T-t_b$, respectively, with coefficients that are fully determined by the data and do not depend on the unknown utility functions $u_{k,t}$. By Theorem \ref{th:nonident}, for any pair of discount factors $(\beta, \delta)$ that solves  \eqref{eq:systemofmomentsA} and \eqref{eq:systemofmomentsB}, unique utilities $u_{k,t}$ can be found that rationalize the choice data and, in particular, solve \eqref{eq:HM}. Therefore, the identified set can be characterized by finding all discount factors  $(\beta,\delta)$ that solve the bivariate polynomial equations \eqref{eq:systemofmomentsA} and \eqref{eq:systemofmomentsB}. Solving systems of bivariate polynomial equations is a well-understood problem where we can draw on standard results from algebraic geometry. Before we use these results to formally characterize the identified set, we first give a simple three-period example.

\begin{example}
\label{ex:threeperiod}
Let $T=3$. Suppose that the exclusion restrictions \eqref{eq:exclusionrestrictionA}
and \eqref{eq:exclusionrestrictionB} hold with $t_a=2$, $t_b=1$, $x_{a,1}=x_{a,2}=x_a$, and $x_{b,1}=x_{b,2}=x_b$, with $x_a\neq x_b$. The first exclusion restriction, $u_{k,2}(x_a)=u_{k,2}(x_b)$, gives
\begin{align}\label{eq:exmom1}
\ln\left(\frac{\cp_{k,2}(x_{a})}{\cp_{K,2}(x_{a})}\right) -\ln&\left(\frac{\cp_{k,2}(x_{b})}{\cp_{K,2}(x_{b})}\right) \nonumber\\ &= \beta \delta \left[\mathbf{Q}_{k,2}(x_{a})-\mathbf{Q}_{K,2}(x_{a})-\mathbf{Q}_{k,2}(x_{b})+\mathbf{Q}_{K,2}(x_{b})\right] \mathbf{m}_{3}.
\end{align}
Provided that 
\begin{equation}
\label{eq:rank}
\left[\mathbf{Q}_{k,2}(x_{a})-\mathbf{Q}_{K,2}(x_{a})-\mathbf{Q}_{k,2}(x_{b})+\mathbf{Q}_{K,2}(x_{b})\right] \mathbf{m}_{3}\neq 0,
\end{equation} 
this equation identifies $\beta\delta$. This is standard and  intuitive. The product $\beta\delta$ is the factor with which the agent in period $2$ discounts the expected utility in the terminal period $3$, which itself does not depend on $\beta$ or $\delta$. The exclusion restriction ensures that the current utility from choice $k$ in period $2$ does not vary between states $x_a$ and $x_b$, so that the choice response in the left hand side of \eqref{eq:exmom1} only reflects how much the agent cares about expected utility in the third period, $\beta\delta$, and how much that expected utility varies between both states, which is nonzero by \eqref{eq:rank}.

Additional exclusion restrictions on utility in period 2 leads to conditions similar to \eqref{eq:exmom1} that identify $\beta\delta$, but these can not identify $\beta$ and $\delta$ separately.  For separate identification of  $\beta$ and $\delta$, we need data on at least three periods. The second exclusion restriction, $u_{k,1}(x_a)=u_{k,1}(x_b)$, allows us to bring in data on the first period and gives the moment condition
\begin{align}\label{eq:exmom2}
\ln\left(\frac{\cp_{k,1}(x_{a})}{\cp_{K,1}(x_{a})}\right) -\ln\left(\frac{\cp_{k,1}(x_{b})}{\cp_{K,1}(x_{b})}\right) &= \beta \delta \left[\mathbf{Q}_{k,1}(x_{a})-\mathbf{Q}_{K,1}(x_{a})-\mathbf{Q}_{k,1}(x_{b})+\mathbf{Q}_{K,1}(x_{b})\right] \nonumber\\
&~~\times\left[\mathbf{m}_{2} +\left(\beta\delta\mathbf{Q}_{K,2} +\left(1-\beta\right)\delta \bar{\mathbf{Q}}_2\right)\mathbf{m}_3\right].
\end{align}
Because $\beta\delta$ is already identified from \eqref{eq:exmom1} under \eqref{eq:rank}, \eqref{eq:exmom2} identifies $\beta$ and $\delta$ if
\begin{align}\label{eq:rankcond2}
\left[\mathbf{Q}_{k,1}(x_{a})-\mathbf{Q}_{K,1}(x_{a})-\mathbf{Q}_{k,1}(x_{b})+\mathbf{Q}_{K,1}(x_{b})\right] \bar{\mathbf{Q}}_2\mathbf{m}_3\neq 0.
\end{align}
\end{example}
In this simple example, the solutions are easy to find. We however need a more systematic approach for the general case. We can analyze the solutions to the moment conditions using standard results from algebraic geometry. Our exposition builds on \cite{coxetal15}. The moment conditions  \eqref{eq:exmom1} and \eqref{eq:exmom2} can be expressed in terms of their \emph{Sylvester matrix}. To construct the Sylvester matrix for an arbitrary bivariate system in $\beta$ and $\delta$, we  choose either $\beta$ or $\delta$ as a base and treat it as constant without loss of generalization. Suppose the two moment conditions, taking $\delta$ to be the base, are polynomials $q_a$ and $q_b$ in $\beta$, of order  $l$ and $m$, respectively, with coefficients $a_0, \hdots, a_l$ and $b_0, \hdots, b_m$, respectively. The Sylvester matrix with base $\delta$ is
\begin{align*}
Syl(q_a(\beta, \delta), q_b(\beta, \delta))_\delta& =
\left(
\begin{array}{cccc cccc}
a_{0} & & & & b_0 &  & & \\
a_{1} &a_0 & & & b_1& b_0 & & \\
a_{2} & a_1& \ddots& &  b_2 & b_1  & \ddots & \\
\vdots & & & a_0 & \vdots &  & & b_0 \\
a_{l} & \vdots & & & b_m & \vdots & & \\
& a_l& & \vdots &  & b_m & & \vdots\\
 & & \ddots & &  &  & \ddots & \\
 & & &a_l & &  & &b_m \\
\end{array}
\right),
\end{align*}
an $(l+m) \times (l+m)$ matrix, where the blanks are zeros.
The \emph{resultant}, the determinant of the Sylvester matrix, is a polynomial in the base parameter. From Corollary 4, chapter 3, of \cite{coxetal15}, each $\delta$ root of the resultant is a root of the moment conditions. This eliminates the base parameter $\delta$, and we can solve for the roots of the remaining parameter, in this case $\beta$. Though we use the resultant to show identification, it also gives an algorithm to solve for the roots of the system: for each value of $\delta$ that sets the resultant to zero, we can solve for the roots of the remaining univariate polynomial in $\gamma$. The literature shows that this algorithm may work well for low order polynomials, but becomes computationally inefficient and unstable when the order exceed e.g. 30. 
\\
\\
We illustrate the resultant using Example \ref{ex:threeperiod} where we reparametrize the problem to be bivariate in $\gamma = \beta \delta$ and $\delta$. Write the moment conditions in  \eqref{eq:exmom1} and \eqref{eq:exmom2} as
\begin{align}
f_a :& \,a_1 \gamma +a_0  = 0\\
f_b: & \, b_2 \gamma^2 + b_1(\delta)\gamma +b_0  = 0
\end{align}
where
\begin{align*}
a_1 & = \left[\mathbf{Q}_{k,2}(x_{a})-\mathbf{Q}_{K,2}(x_{a})-\mathbf{Q}_{k,2}(x_{b})+\mathbf{Q}_{K,2}(x_{b})\right] \mathbf{m}_{3} \\
-a_0 & = \ln\left(\frac{\cp_{k,2}(x_{a})}{\cp_{K,2}(x_{a})}\right) -\ln\left(\frac{\cp_{k,2}(x_{b})}{\cp_{K,2}(x_{b})}\right) \\
b_2 & = \left[\mathbf{Q}_{k,1}(x_{a})-\mathbf{Q}_{K,1}(x_{a})-\mathbf{Q}_{k,1}(x_{b})+\mathbf{Q}_{K,1}(x_{b})\right][\mathbf{Q}_{K,2} -\bar{\mathbf{Q}}_2]\mathbf{m}_3 \\
b_{1,1} & =  \left[\mathbf{Q}_{k,1}(x_{a})-\mathbf{Q}_{K,1}(x_{a})-\mathbf{Q}_{k,1}(x_{b})+\mathbf{Q}_{K,1}(x_{b})\right]\bar{\mathbf{Q}}_2\mathbf{m}_3 \\
b_{1,2} & =  \left[\mathbf{Q}_{k,1}(x_{a})-\mathbf{Q}_{K,1}(x_{a})-\mathbf{Q}_{k,1}(x_{b})+\mathbf{Q}_{K,1}(x_{b})\right]\mathbf{m}_{2}\\
b_1(\delta) & = \delta b_{1,1} + b_{1,2}\\
-b_0 & = \ln\left(\frac{\cp_{k,1}(x_{a})}{\cp_{K,1}(x_{a})}\right) -\ln\left(\frac{\cp_{k,1}(x_{b})}{\cp_{K,1}(x_{b})}\right) \\
\end{align*}
In this example, we get
\begin{align}
Syl(f_a(\gamma, \delta), f_b(\gamma, \delta))_\delta & = 
\left(
\begin{array}{ccc}
 a_1 &0 & b_2 \\
a_0 & a_1 & b_1(\delta)  \\
 0  & a_0& b_0\\
\end{array}
\right)
\end{align}
The resultant is the determinant $Syl(f_a(\gamma, \delta), f_b(\gamma, \delta))_\gamma$
\begin{align*}
Res(f_a(\gamma, \delta), f_b(\gamma, \delta))_\gamma  & =  a_1(a_1b_0-a_0b_1(\delta)) + a_0^2b_2 \nonumber \\
& = -a_0a_1b_1(\delta) + a_1^2b_0 + a_0^2b_2 \nonumber \\
& =-\delta a_0a_1b_{1,1} + a_0(a_1b_{1,2}+a_0b_2) +  a_1^2b_0 
\end{align*}
Setting the resultant to zero gives a unique root
\begin{align}
\delta^* & = \frac{a_0a_1 b_{1,2}+ a_1^2b_0 + a_0^2b_2}{a_0a_1b_{1,1}},
\end{align}
if it exists. Inserting $\delta^*$ in \eqref{eq:exmom1}, which is linear in $\beta$, gives the solution, the common roots of the moment conditions. 

We see that if $b_{1,1} = 0$, which violates the non-zero condition in \eqref{eq:rankcond2}, then no root $\delta^*$ exists. This rejects the model: We can not find parameters $\beta$ and $\delta$ that rationalize the data. If $a_1 = 0$, which violates the rank condition in \eqref{eq:rank}, and $a_0 = 0$, then  the resultant is zero for all values of the $\delta$. This implies that the moment conditions have a \emph{common factor}: it has an infinity of solutions and all identification is lost. The resultant generalizes the rank condition in the geometric case, where the identifying moment condition is a univariate polynomial moment condition, to the hyperbolic case, where the identifying moment conditions are a bivariate  set of polynomial moment conditions.

The example illustrates that the common factors of \eqref{eq:systemofmomentsA} and \eqref{eq:systemofmomentsB} characterize exceptions to the identification of $(\beta,\delta)$ from these moment conditions. We therefore need a definition of common factors.  Write the moment conditions  \eqref{eq:systemofmomentsA} and \eqref{eq:systemofmomentsB} as $f_a(\beta,\delta)=0$ and $f_b(\beta,\delta)=0$, respectively, with $f_a$ and $f_b$ $(T-t)$'th order polynomials. 
\begin{definition}
The polynomials $f_a$ and $f_b$ have a {\em common factor} $h$ if $f_a(\beta,\delta)=h(\beta,\delta)g_a(\beta,\delta)$ and $f_a(\beta,\delta)=h(\beta,\delta)g_a(\beta,\delta)$, with $h$ a polynomial of of order one or higher and $g_a$ and $g_b$ polynomials.
\end{definition}
 A simple example of a common factor of \eqref{eq:systemofmomentsA} and \eqref{eq:systemofmomentsB} in the case that their left hand sides are zero (no choice responses) is $h(\beta,\delta)=\delta$. 
\begin{theorem}\label{th:theorem1}[{\bf Identified set}]
Suppose that the exclusion restrictions in \eqref{eq:exclusionrestrictionA} and \eqref{eq:exclusionrestrictionB} hold and that the polynomial differences between the left and right hand side of \eqref{eq:systemofmomentsA} and \eqref{eq:systemofmomentsB} are nonzero and have no common factors. Then the identified set contains at most $(T-t_a)(T-t_b)$ elements. 
\end{theorem}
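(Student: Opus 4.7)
The plan is to invoke B\'ezout's theorem for plane curves, which bounds the number of common zeros of two bivariate polynomials with no common factor by the product of their total degrees. The subtlety is that, read directly in the coordinates $(\beta,\delta)$, the polynomials $f_a$ and $f_b$ have total degrees $2(T-t_a)$ and $2(T-t_b)$, which would yield a bound four times too large. The key observation is that the economically meaningful change of variables $\gamma=\beta\delta$, foreshadowed in Example~\ref{ex:threeperiod}, collapses this redundancy and aligns the algebraic structure with the B\'ezout bound we want.

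First, I would pass to coordinates $(\gamma,\delta)$ and define $\tilde f_a(\gamma,\delta):=f_a(\gamma/\delta,\delta)$ and $\tilde f_b(\gamma,\delta):=f_b(\gamma/\delta,\delta)$, and verify that these are genuine polynomials of total degrees $T-t_a$ and $T-t_b$. The key identity is $\delta\cdot \mathbf{Q}^{pb}_r(\gamma/\delta)=\gamma\mathbf{Q}_K+(\delta-\gamma)\overline{\mathbf{Q}}_r$, which is linear in $(\gamma,\delta)$. For each $\tau$ in the sum on the right-hand side of \eqref{eq:systemofmomentsA}, the number of $\delta$-powers, $\tau-t_a-1$, exactly matches the number of $\mathbf{Q}^{pb}_r$ factors, $\tau-1-t_a$; distributing one $\delta$ into each $\mathbf{Q}^{pb}_r$ rewrites that term as a product of $\tau-1-t_a$ linear forms in $(\gamma,\delta)$, and the outer $\beta\delta=\gamma$ contributes one further degree, for a total of $\tau-t_a\leq T-t_a$. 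The analogous computation for $\tilde f_b$ gives total degree $T-t_b$.

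Next, I would show that the ``no common factor'' hypothesis transfers to the reparametrized system: if $\tilde h(\gamma,\delta)$ were a nonconstant common factor of $\tilde f_a$ and $\tilde f_b$, then $\tilde h(\beta\delta,\delta)$ would remain nonconstant in $(\beta,\delta)$ and would divide both $f_a(\beta,\delta)=\tilde f_a(\beta\delta,\delta)$ and $f_b(\beta,\delta)=\tilde f_b(\beta\delta,\delta)$, contradicting the hypothesis. B\'ezout's theorem then caps the number of common zeros of $\tilde f_a$ and $\tilde f_b$ in the $(\gamma,\delta)$-plane at $(T-t_a)(T-t_b)$. Since the map $(\beta,\delta)\mapsto(\beta\delta,\delta)$ is injective whenever $\delta>0$, every element of the identified set corresponds to a distinct point in the $(\gamma,\delta)$-plane, so the bound carries back to $(\beta,\delta)$.

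The main obstacle will be the degree bookkeeping in the first step: it relies on the precise matching between the powers of $\delta$ and the number of $\mathbf{Q}^{pb}_r$ factors in each term of the value-function expansion~\eqref{eq:vstar}, a matching that is itself a structural property of the present-bias dynamics. Once this total-degree reduction is in place, the remainder is a routine appeal to B\'ezout together with the (economically harmless) invertibility of $\gamma=\beta\delta$ away from $\delta=0$.
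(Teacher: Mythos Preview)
Your proof is correct and, like the paper, rests on B\'ezout's theorem. The paper's own proof is a single sentence that asserts the polynomials have ``order $T-t_a$ and $T-t_b$'' in $(\beta,\delta)$ and then invokes B\'ezout directly; as you observe, the \emph{total} degree in $(\beta,\delta)$ is actually $2(T-t_a)$ and $2(T-t_b)$, so a literal application would give $4(T-t_a)(T-t_b)$. Your passage to the coordinates $(\gamma,\delta)=(\beta\delta,\delta)$, together with the degree bookkeeping that exploits the exact match between the power $\delta^{\tau-t-1}$ and the number of $\mathbf{Q}^{pb}_r(\beta)$ factors in \eqref{eq:vstar}, is precisely what is needed to make the stated bound rigorous; the paper hints at this reparametrization in the resultant calculation following Example~\ref{ex:threeperiod} but does not spell it out in the proof. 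Your transfer of the no-common-factor hypothesis and the injectivity argument (with the observation that the hypothesis already excludes solutions at $\delta=0$, since both left-hand sides zero would force the common factor $\delta$) are the right closing steps.
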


\begin{proof}
By B\'{e}zout's Theorem \citep{bezout}, the system then has no more than $(T-t_a)(T-t_b)$ zeros in $\mathbb{C}^2$, which is also an upper bound on the number of zeros on the domain of $\beta$ and $\delta$. 
\end{proof}
\noindent Bezout's theorem generalizes the fundamental theorem of algebra to multivariate polynomials, see e.g. \cite{coxetal15}. Theorem \ref{th:theorem1} does not guarantee a solution. The zero set may be empty, such as in the example above. In that case, the model is rejected.\footnote{See \cite{qe20:abbringdaljord} for a discussion of the empirical content of dynamic discrete choice models under exclusion restrictions.} 

Except for certain special cases, such as when one moment condition is a multiple of the other, we have not found obvious economic interpretations of common factors. The existence of common factors can however easily be verified on a case-by-case basis by calculating the resultant. In practice, a resultant that is everywhere close to zero signals numerical instability similar to a regression matrix with a high condition number where identification is close to lost to multicollinearity.

\section{Extension to an infinite horizon}
\label{sec:infinitehorizon}

Section \ref{sec:identification} analyzes a finite horizon decision model. We will now consider an infinite horizon, stationary version of that model. This is \cites{ier15:fangwang} model, but with sophisticated agents. We will provide a new equilibrium existence result for this model and show that Section \ref{sec:identification}'s identification results extend to it.

\subsection{Model}

Take Section \ref{sec:model}'s model, but let time be indexed by $t \in\{1, 2, \hdots\}$ and utility from choice $k$ in state $x$ at time equal $u_{k}(x) + \epsilon_{k,t}$, with $u_k$ time invariant.  Also, restrict $\delta$ to $[0,1)$ to ensure convergence of the agent's expected discounted utility (recall that $\beta \in (0,1]$). 

Because neither the model primitives nor the decision horizon change over time, the resulting decision problem is stationary. Consequently, it is natural for the agent to use a \emph{stationary Markov strategy}, say $\tilde{\sigma}: \mathcal{X}\times \mathbb{R}^K \rightarrow \mathcal{D}$, to map the current state (but not time itself) into an action in each period. If the agent uses such a strategy $\tilde\sigma$ in all future periods, her current choice-$k$ specific value in state $x$ equals
\begin{align}
\label{eq:wStatscal}
w_{k}(x;\tilde{\sigma})  = u_{k}(x) + \beta \delta \int v(x';\tilde{\sigma})dQ_k(x'|x),
\end{align}
with perceived long run value 
\begin{align}\label{eq:perceivedvalueStatscal}
v(x;\tilde{\sigma}) = \mathbb{E}_{\epsilon_{t+1}}\bigg[u_{\tilde{\sigma}(x, \epsilon_{t+1})}(x) +  \epsilon_{\tilde{\sigma}(x, \epsilon_{t+1}),t+1}+ \delta \int v(x';\tilde{\sigma})dQ_{\tilde{\sigma}(x, \epsilon_{t+1})}(x'|x) \bigg].
\end{align}
The agent uses a \emph{stationary perception perfect strategy} $\sigma^*$, which is a best response to employing the same strategy in all future periods:
\begin{align}
\label{eq:sppe}
\sigma^*(x, \epsilon) & \in \arg \max_{k \in \mathcal{D}} \{w_{k}(x;{\sigma}^*) + \epsilon_{k} \}\text{ for all }(x,\epsilon)\in{\cal D}\times\mathbb{R}. 
\end{align}
In turn, the strategy $\sigma^*$ implies stationary conditional choice probabilities
\begin{equation}
\label{eq:ccpStat}
\cp_{k}(x;\sigma^*)=\mathbb{E}_{\epsilon_t} [\mathbbm{1}\{{\sigma}^*(x, \epsilon_t) = k\}];~~k\in{\cal D},~x\in{\cal X}.
\end{equation}  

\subsection{Existence of a stationary perception perfect strategy}

We will exploit the specific econometric structure of our model, in particular the presence of additively separable and independent utility shocks, to demonstrate existence of a stationary perception perfect strategy.\footnote{\citet[][p. 570]{ier15:fangwang} claims, for a model of a partially naive agent that encompasses ours, that  ``[t]he existence of the perception-perfect strategy profile is shown in Peeters (2004) for the same class of stochastic games.''  \citet{mu04:peeters}, however, assumes a finite state space, so that its results cover neither \citeauthor{ier15:fangwang}'s model nor ours. If anything, the analysis in \citeauthor{mu04:peeters} suggests that the pure and stationary perfection perfect strategies that we focus on do {\em not} always exist: For its case with finite states, it demonstrates that mixed strategies are needed to ensure existence.}
To this end, we return to Section \ref{sec:identification}'s matrix notation, without time subscripts. From \eqref{eq:wStatscal}, the current choice-$k$ specific values under perceived future use of $\sigma^*$ equal
\begin{align}
\label{eq:wStat}
\mathbf{w}_{k}(\sigma^*) & = \mathbf{u}_{k} + \beta \delta\mathbf{Q}_k\mathbf{v}(\sigma^*).
\end{align}
Using a straightforward adaption of Section \ref{sec:identification}'s derivations, we can rewrite \eqref{eq:perceivedvalueStatscal} as
\[
\mathbf{v}({\sigma}^*) = -\ln\mathbf{\cp}_K(\sigma^*)+\mathbf{w}_K(\sigma^*)+\delta(1-\beta)\sum_{k\in{\cal D}}\mathbf{\CP}_k(\sigma^*){\mathbf{Q}_k}\mathbf{v}({\sigma}^*),
\]
where $\mathbf{\cp}_k(\sigma^*)$ is a $J\times 1$ vector that stacks the conditional choice probabilities $\cp_{k}(x;\sigma^*)$, $x\in{\cal X}$,  and $\mathbf{\CP}_k(\sigma^*)$ is a $J\times J$ diagonal matrix with the same probabilities on its diagonal.\footnote{As in Section \ref{sec:identification}, $-\ln\mathbf{\cp}_K(\sigma^*)$ is a vector of McFadden surpluses. We could denote it with $\mathbf{m}(\sigma^*)$, but it is important for the equilibrium analysis to keep its dependence on $\mathbf{\cp}_K(\sigma^*)$ explicit.}
Substituting \eqref{eq:wStat} for $k=K$ and rearranging gives
\begin{align}\label{eq:perceivedvalueStat}
\mathbf{v}({\sigma}^*) = \left[\mathbf{I}-\delta\left(\beta\mathbf{Q}_K+(1-\beta)\sum_{k\in{\cal D}}\mathbf{\CP}_k(\sigma^*){\mathbf{Q}_k}\right)\right]^{-1}\left[-\ln\mathbf{\cp}_K(\sigma^*)+\mathbf{u}_K\right],
\end{align}
with $\mathbf{I}$ a $J\times J$ identity matrix. Note that $\beta\mathbf{Q}_K+(1-\beta)\sum_{k\in{\cal D}}\mathbf{\CP}_k(\sigma^*){\mathbf{Q}_k}$ is a stochastic matrix and that $\delta\in[0,1)$, so that the inverse in the right hand side of \eqref{eq:perceivedvalueStat} exists. 

Now stack  $\mathbf{\cp}_1(\sigma^*),\ldots,\mathbf{\cp}_K(\sigma^*)$ in a $K J\times 1$ vector $\mathbf{\cp}(\sigma^*)$. Note that $\mathbf{\cp}(\sigma^*)$ takes values in ${\cal \CP}\equiv\left\{(\tilde{\mathbf{\cp}}_1,\ldots,\tilde{\mathbf{\cp}}_K)\in\mathbb{R}^{K J}\;\vline\; \tilde{\mathbf{\cp}}_1,\ldots,\tilde{\mathbf{\cp}}_K\geq 0;~\sum_{k\in{\cal D}}\tilde{\mathbf{\cp}}_k=\bm{1}\right\}$, with $\bm{1}$ a $K\times 1$ vector of ones. Substituting \eqref{eq:perceivedvalueStat} in \eqref{eq:wStat} gives 
$\mathbf{w}_k(\sigma^*)=\bm{\omega}_k(\mathbf{\cp}(\sigma^*))$,
with $\bm{\omega}_k: {\cal \CP}\rightarrow\mathbb{R}^J$ such that
\[
\bm{\omega}_k(\tilde{\mathbf{\cp}})=
\mathbf{u}_{k} + \beta \delta\mathbf{Q}_k\left[\mathbf{I}-\delta\left(\beta\mathbf{Q}_K+(1-\beta)\sum_{k\in{\cal D}}\tilde{\mathbf{\CP}}_k{\mathbf{Q}_k}\right)\right]^{-1}\left[-\ln\tilde{\mathbf{\cp}}_K+\mathbf{u}_K\right].
\]
Note that $\bm{\omega}_k$ does not depend on $\sigma^*$, so that $\mathbf{w}_k(\sigma^*)=\bm{\omega}_k(\mathbf{\cp}(\sigma^*))$ only depends on $\sigma^*$ through $\mathbf{\cp}(\sigma^*)$.
Let $\bm{\omega}_k(x;\mathbf{\cp}(\sigma^*))$ be the element of $\bm{\omega}_k(\mathbf{\cp}(\sigma^*))$ corresponding to state $x\in{\cal X}$. For $\sigma^*$ to be a stationary perception perfect strategy, it needs to be a best response to the perception that later choices are made with the probabilities in $\mathbf{\cp}(\sigma^*)$, as in \eqref{eq:sppe} with $w_k(x;\sigma^*)=\bm{\omega}_k(x;\mathbf{\cp}(\sigma^*))$. Such a strategy exists.

\begin{theorem}[{\bf Existence}]
\label{th:existence}
The infinite horizon dynamic discrete choice model with sophisticated quasi-hyperbolic discounting has a stationary perception perfect strategy.
\end{theorem}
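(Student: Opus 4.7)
My plan is to recast existence as a fixed-point problem for the logit best-response operator on the compact convex simplex $\mathcal{S}$ and apply Brouwer's theorem. The identity $\mathbf{w}_k(\sigma^*)=\bm{\omega}_k(\mathbf{p}(\sigma^*))$ already established in the excerpt reduces the PPE condition \eqref{eq:sppe} to finding $\mathbf{p}^*\in\mathcal{S}$ that is a logit best response to $\bm{\omega}(\mathbf{p}^*)$, i.e., a fixed point $\mathbf{p}^*=\Phi(\mathbf{p}^*)$ of
\[
\Phi_k(\tilde{\mathbf{p}})(x) := \frac{\exp(\bm{\omega}_k(x;\tilde{\mathbf{p}}))}{\sum_{j\in\mathcal{D}}\exp(\bm{\omega}_j(x;\tilde{\mathbf{p}}))}.
\]
The corresponding $\sigma^*$, defined as the logit best response to $\bm{\omega}(\mathbf{p}^*)$, will then be the desired stationary perception perfect strategy.

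The main obstacle is that $\bm{\omega}_k$ as written contains $-\ln\tilde{\mathbf{p}}_K$, which blows up at the boundary of $\mathcal{S}$, so $\Phi$ is not continuous up to the boundary and Brouwer does not apply directly. I would address this by rewriting $\mathbf{v}(\tilde{\mathbf{p}})$ using the primitive Bellman recursion with the entropy correction for type~1 extreme value shocks: for any logit distribution $\tilde{\mathbf{p}}$, the standard log-sum-exp identity gives $E_\epsilon[u_{\sigma}+\epsilon_{\sigma}]=\sum_k\tilde p_k u_k-\sum_k\tilde p_k\ln\tilde p_k$ (up to an additive Euler constant absorbed into a normalization). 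Substituting and collecting terms converts \eqref{eq:perceivedvalueStat} into the equivalent representation
\[
\mathbf{v}(\tilde{\mathbf{p}})=\Bigl[\mathbf{I}-\delta\sum_{k\in\mathcal{D}}\tilde{\mathbf{P}}_k\mathbf{Q}_k\Bigr]^{-1}\bar{\mathbf{u}}(\tilde{\mathbf{p}}),\quad \bar{u}(\tilde{\mathbf{p}})(x):=\sum_k\tilde p_k(x)u_k(x)-\sum_k\tilde p_k(x)\ln\tilde p_k(x),
\]
with the convention $0\ln 0:=0$. This alternative expression coincides with \eqref{eq:perceivedvalueStat} at any logit vector $\tilde{\mathbf{p}}$ (in particular, at every candidate fixed point of $\Phi$), but is now jointly continuous on all of $\mathcal{S}$: the entropy is continuous at the vertices, $\sum_k\tilde{\mathbf{P}}_k\mathbf{Q}_k$ is stochastic and continuous in $\tilde{\mathbf{p}}$, and $\delta<1$ makes the inverse exist via a uniformly convergent Neumann series. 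Redefining $\bm{\omega}_k$ through this $\mathbf{v}$ extends $\Phi$ to a continuous self-map of $\mathcal{S}$ into its relative interior.

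Brouwer's fixed-point theorem then yields $\mathbf{p}^*=\Phi(\mathbf{p}^*)\in\mathcal{S}$, which must lie in the interior because the logit is strictly positive. It remains to check that the resulting $\sigma^*$ is a PPE: by the fixed-point property its CCPs are $\mathbf{p}(\sigma^*)=\Phi(\mathbf{p}^*)=\mathbf{p}^*$, and because $\mathbf{p}^*$ is itself a logit vector, the reformulated $\bm{\omega}(\mathbf{p}^*)$ agrees with the original one and hence, by the excerpt's derivation, with the actual current choice-specific value $\mathbf{w}_k(\sigma^*)$. Thus $\sigma^*$ is a best response to $\mathbf{w}_k(\sigma^*)$ in the sense of \eqref{eq:sppe}. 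The entire argument hinges on the entropy reformulation that removes the $-\ln$ singularity in $\bm{\omega}_k$; once that step is in place, existence is a textbook Brouwer application.
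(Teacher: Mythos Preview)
Your approach is the paper's: recast the stationary perception perfect equilibrium as a fixed point of the logit best-response operator on the compact convex CCP simplex and invoke Brouwer. Your entropy reformulation of $\mathbf{v}$ is in fact more careful than the paper's own continuity argument, whose footnote asserts that $\bm{\omega}_k$ is a ratio of polynomials in $\tilde{\mathbf{\cp}}$ while glossing over the very $-\ln\tilde{\mathbf{\cp}}_K$ singularity you flag; replacing it with $\sum_k \tilde p_k u_k-\sum_k\tilde p_k\ln\tilde p_k$ (continuous on the closed simplex via $0\ln 0=0$) is a correct fix. One small caveat: the reformulated and the paper's $\bm{\omega}$ do \emph{not} coincide at every interior $\tilde{\mathbf{p}}$ but only at fixed points, since the paper's derivation uses $\mathbb{E}[\max_k(w_k+\epsilon_k)]=-\ln p_K+w_K$, which requires $\tilde{\mathbf{p}}$ to be the logit of the $w_k$'s generated by $\tilde{\mathbf{p}}$ itself; this is harmless for your argument, because at a fixed point $\mathbf{p}^*$ the direct computation of $\mathbf{v}(\sigma^*)$ from \eqref{eq:perceivedvalueStatscal} yields your $\mathbf{v}_S(\mathbf{p}^*)$ up to the Euler constant, so $\bm{\omega}_k(\mathbf{p}^*)$ and $\mathbf{w}_k(\sigma^*)$ differ by a $k$-independent constant and the $\arg\max$ in \eqref{eq:sppe} is preserved.
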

\begin{proof}
Let $\varpi:\left(\mathbb{R}^J\right)^K\rightarrow{\cal \CP}$ map choice specific values into logit choice probabilities. That is,  for $\mathbf{w}_1,\ldots,\mathbf{w}_K\in\mathbb{R}^J$, $\varpi(\mathbf{w}_1,\ldots,\mathbf{w}_K)$ stacks 
$\exp\left[w_k(x)\right]/\sum_{l\in{\cal D}}\exp\left[w_l(x)\right]$; $x\in{\cal X}$, $k\in{\cal D}$. Then $\pi\equiv\varpi\circ(\bm{\omega}_1,\ldots,\bm{\omega}_K):{\cal \CP}\rightarrow{\cal \CP}$ maps perceived future choice probabilities to the logit choice probabilities implied by the best response to them (which, unlike the best response in \eqref{eq:sppe}  itself, are unique, because the agent is only indifferent on a set of $\epsilon$ with probability zero). The fixed points of $\pi$ are the choice probabilities $\mathbf{\cp}(\sigma^*)$ corresponding to a stationary perception perfect strategy $\sigma^*$. Note that $\pi$ is a continuous function from the convex and compact set ${\cal \CP}$ into itself.\footnote{\label{fn:existence}To establish continuity of $\pi$, write $\bm{\omega}_k(\tilde{\mathbf{\cp}})=\mathbf{u}_{k} + \beta \delta\mathbf{Q}_k |\mathbf{A}(\tilde{\mathbf{\cp}})|^{-1}\overline{\mathbf{A}}(\tilde{\mathbf{\cp}})\left[-\ln\tilde{\mathbf{\cp}}_K+\mathbf{u}_K\right]$, where $\mathbf{A}(\tilde{\mathbf{\cp}})\equiv \mathbf{I}-\delta\left[\beta\mathbf{Q}_K+(1-\beta)\sum_{k\in{\cal D}}\tilde{\mathbf{\CP}}_k{\mathbf{Q}_k}\right]$, its determinant $|\mathbf{A}(\tilde{\mathbf{\cp}})|$ is a finite sum of products of $J$ entries of $\mathbf{A}(\tilde{\mathbf{\cp}})$, and the elements of its adjugate $\overline{\mathbf{A}}(\tilde{\mathbf{\cp}})$ are the cofactors of $\mathbf{A}(\tilde{\mathbf{\cp}})$ and therefore finite sums  of products of $J-1$ of its entries. It follows that each element of $\bm{\omega}_k(\tilde{\mathbf{\cp}})$ is a finite sum of finite products of elements of $\tilde{\mathbf{\cp}}$ divided by another such finite sum, $|\mathbf{A}(\tilde{\mathbf{\cp}})|\neq 0$, and is therefore continuous in $\tilde{\mathbf{\cp}}$. Consequently, $(\bm{\omega}_1,\ldots,\bm{\omega}_K)$ is continuous. Because, obviously, $\varpi$ is continuous as well, $\pi=\varpi\circ(\bm{\omega}_1,\ldots,\bm{\omega}_K)$ is continuous. } So, by Brouwer's fixed point theorem, it has a fixed point $\mathbf{\cp}^*$, which equals $\mathbf{\cp}(\sigma^*)$ for some stationary perception perfect strategy $\sigma^*$. 
\end{proof}
Theorem \ref{th:existence} does not guarantee that the stationary perception perfect strategy $\sigma^*$ is unique, not even up to equivalence of the implied choice probabilities $\mathbf{\cp}(\sigma^*)$. Specifically, its proof relies on the application of Brouwer's fixed point theorem, which does not rule out that there are multiple fixed points $\mathbf{\cp}^*$.

\subsection{Identification}

Suppose that data allows us to determine conditional choice probabilities $\mathbf{\cp}$ and state transition probabilities $\mathbf{Q}_1,\ldots,\mathbf{Q}_K$, with 
\begin{align}
\label{eq:pStat}
\mathbf{\cp}&=\mathbf{\cp}(\sigma^*)
\end{align}
for some stationary perception perfect strategy $\sigma^*$.\footnote{As in the finite horizon case (see Footnote \ref{fn:dgp}), we typically observe the process $\{x_t.d_t\}$ over a finite number of periods. Under our model's assumptions, this process is a stationary Markov process. This can be tested. Unobserved heterogeneity, which may now include heterogeneous equilibrium selection, may render the data non-Markovian. Time variation in the parameters may lead to nonstationarity. Standard approaches to deal with heterogeneity can be applied to first identify agent-level choice and state transition probabilities. With these in hand, our results can be applied.} Under this assumption, Section \ref{sec:identification}'s (non-)identification results extend to the infinite horizon model.

First, we establish that, without further restrictions, the model is just identified if we fix the discount factors.
\begin{theorem}[{\bf Nonidentification}]
\label{th:nonidentStat}
For given $\mathbf{Q}_1,\ldots,\mathbf{Q}_K$; $\beta$; $\delta$; $\mathbf{u}_K$; and $\mathbf{\cp}$; there exists unique utilities $\mathbf{u}_1,\ldots,\mathbf{u}_{K-1}$  such that \eqref{eq:sppe}--\eqref{eq:pStat} hold.
\end{theorem}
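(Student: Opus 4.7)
The plan is to mirror the finite-horizon argument of Theorem \ref{th:nonident}, but to replace the backward induction (which is unavailable without a terminal period) with the explicit closed-form expression for $\mathbf{v}(\sigma^*)$ that the authors have already derived in \eqref{eq:perceivedvalueStat}. The key observation is that this formula expresses the perceived long-run value purely in terms of the primitives that the theorem takes as given: the transitions $\mathbf{Q}_1,\ldots,\mathbf{Q}_K$; the discount factors $\beta$ and $\delta$; the normalized utility $\mathbf{u}_K$; and the observed choice probabilities $\mathbf{\cp}$ (which determine $\mathbf{\CP}_k(\sigma^*)$ and $-\ln\mathbf{\cp}_K(\sigma^*)$). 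The matrix inverse on the right-hand side of \eqref{eq:perceivedvalueStat} exists because $\beta\mathbf{Q}_K+(1-\beta)\sum_{k\in{\cal D}}\mathbf{\CP}_k(\sigma^*)\mathbf{Q}_k$ is stochastic and $\delta\in[0,1)$, so $\mathbf{v}(\sigma^*)$ is uniquely pinned down.

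With $\mathbf{v}(\sigma^*)$ in hand, I would first compute $\mathbf{w}_K(\sigma^*)$ uniquely from \eqref{eq:wStat} at $k=K$, namely $\mathbf{w}_K(\sigma^*)=\mathbf{u}_K+\beta\delta\mathbf{Q}_K\mathbf{v}(\sigma^*)$. Next, using the Hotz-Miller inversion \eqref{eq:HM} adapted to the stationary setting, which follows from \eqref{eq:sppe}, \eqref{eq:ccpStat}, and \eqref{eq:pStat} exactly as in the finite-horizon case, each value contrast $\mathbf{w}_k(\sigma^*)-\mathbf{w}_K(\sigma^*)$ is uniquely determined by the log-ratios $\ln(\mathbf{\cp}_k/\mathbf{\cp}_K)$ for $k\in{\cal D}\setminus\{K\}$. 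Adding the (already determined) $\mathbf{w}_K(\sigma^*)$ then gives each $\mathbf{w}_k(\sigma^*)$ uniquely.

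Finally, I would invert \eqref{eq:wStat} for each $k\in{\cal D}\setminus\{K\}$ to obtain
\[
\mathbf{u}_k=\mathbf{w}_k(\sigma^*)-\beta\delta\mathbf{Q}_k\mathbf{v}(\sigma^*),
\]
which is unique because both $\mathbf{w}_k(\sigma^*)$ and $\mathbf{v}(\sigma^*)$ have been uniquely determined. Consistency with \eqref{eq:sppe}--\eqref{eq:pStat} is automatic: the constructed $\mathbf{u}_k$ reproduce the $\mathbf{w}_k(\sigma^*)$, which reproduce the log-ratios of $\mathbf{\cp}$ through the logit formula, and $\sigma^*$ is a best response because the value contrasts match those implied by $\mathbf{\cp}$.

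I do not anticipate a serious obstacle here, because the hard analytical work--- showing that $\mathbf{v}(\sigma^*)$ admits the explicit linear-algebraic representation \eqref{eq:perceivedvalueStat} and that the relevant matrix is invertible--- was already carried out to prove existence in Theorem \ref{th:existence}. The only mild subtlety is verifying that the standard Hotz-Miller step still delivers $\mathbf{w}_k(\sigma^*)-\mathbf{w}_K(\sigma^*)$ uniquely from stationary conditional choice probabilities, but this is immediate from the extreme-value assumption on $\epsilon_{k,t}$ and the definition of $\sigma^*$ as a pointwise maximizer in \eqref{eq:sppe}, exactly as in the argument following \eqref{eq:HM}.
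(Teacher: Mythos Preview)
Your proposal is correct and follows essentially the same approach as the paper's own proof: use \eqref{eq:perceivedvalueStat} to pin down $\mathbf{v}(\sigma^*)$ uniquely, then \eqref{eq:wStat} at $k=K$ for $\mathbf{w}_K(\sigma^*)$, then the Hotz--Miller log-ratios for the remaining $\mathbf{w}_k(\sigma^*)$, and finally invert \eqref{eq:wStat} to recover each $\mathbf{u}_k$. The paper's proof is slightly terser but the logical skeleton is identical.
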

\begin{proof}
Given $\delta$; $\beta$; $\mathbf{\cp}$; and $\mathbf{Q}_1,\ldots,\mathbf{Q}_K$; \eqref{eq:perceivedvalueStat} uniquely determines $\mathbf{v}(\sigma^*)$. With $\mathbf{u}_K$, $\beta$, $\delta$, and $\mathbf{Q}_K$; \eqref{eq:wStat} for $k=K$ subsequently determines $\mathbf{w}_K(\sigma^*)$. Then, given $\mathbf{\cp}$ and $\mathbf{w}_K(\sigma^*)$; $\mathbf{w}_{k}(\mathbf{\sigma^*}) =\mathbf{w}_{K}(\mathbf{\sigma^*})+\ln\mathbf{\cp}_{k}-\ln\mathbf{\cp}_{K}$, $k\in{\cal D}/\{K\}$, are the unique choice specific values consistent with  \eqref{eq:sppe}, \eqref{eq:ccpStat}, and \eqref{eq:pStat}. Given $\beta$; $\delta$; $\mathbf{Q}_k$, $k\in{\cal D}/\{K\}$; and $\mathbf{v}(\sigma^*)$; these can only be reconciled with \eqref{eq:wStat} by setting  $\mathbf{u}_{k}=\mathbf{w}_{k}(\sigma^*)-\beta \delta\mathbf{Q}_k\mathbf{v}(\sigma^*)$, $k\in{\cal D}/\{K\}$.
\end{proof}

Like Theorem \ref{th:nonident} before, Theorem \ref{th:nonidentStat} implies that further model restrictions are needed to identify $\beta$ and $\delta$. Moreover, it suggests that we concentrate the identification analysis on $\beta$ and $\delta$, as it ensures that we can always find utilities that rationalize the choice probabilities once we have identified those two parameters.

As in Section \ref{sec:identification}, we proceed by imposing the minimum of two exclusion restrictions on utility required for identification of $\beta$ and $\delta$: For some $k\in{\cal D}/\{K\}$,
\begin{align}
 u_{k}( x_{a,1}) &= u_{k}( x_{a,2}) ~~~\text{and}\label{eq:exclusionrestrictionsStatA}\\
 u_{k}( x_{b,1}) &=  u_{k}(x_{b,2}) \label{eq:exclusionrestrictionsStatB}
\end{align}
for  states $x_{a,1}, x_{a,2}, x_{b,1}, x_{b,2} \in \mathcal{X}$ such that $x_{a,1}\neq x_{a,2}$ and $x_{b,1}\neq x_{b,2}$.  We also again set $\mathbf{u}_K=\bm{0}$. A derivation as in Section \ref{sec:identification}, based on \eqref{eq:wStat} and \eqref{eq:perceivedvalueStat}, using \eqref{eq:sppe}, \eqref{eq:ccpStat}, and \eqref{eq:pStat} to substitute observed choice probabilties for unknown quantities, and \eqref{eq:exclusionrestrictionsStatA} and \eqref{eq:exclusionrestrictionsStatB} to difference out flow utility, gives
\begin{align}
\Delta^2_a\ln\mathbf{\cp}_k &= \beta \delta \Delta^2_a\mathbf{Q}_k\mathbf{A}(\beta,\delta)^{-1}\mathbf{m}~~~\text{and}~~~\Delta^2_b\ln\mathbf{\cp}_k = \beta \delta \Delta^2_b\mathbf{Q}_k\mathbf{A}(\beta,\delta)^{-1}\mathbf{m},\label{eq:systemofmomentsStatratio}
\end{align}
where\footnote{Note that $\mathbf{A}(\beta,\delta)$ is the same matrix as $\mathbf{A}(\tilde{\mathbf{\cp}})$ in Footnote \ref{fn:existence}. We allow this slight abuse of notation, instead of introducing a new symbol for this matrix, to clearly indicate both matrices are the same.} 
\begin{align*}
\Delta^2_a\ln\mathbf{\cp}_k&\equiv\ln\left(\frac{\cp_{k}(x_{a,1})}{\cp_{K}(x_{a,1})}\right) - \ln\left(\frac{\cp_{k}( x_{a,2})}{\cp_{K}(x_{a,2})}\right),\\
\Delta^2_a\mathbf{Q}_k&\equiv\mathbf{Q}_{k}(x_{a,1})-\mathbf{Q}_K(x_{a,1}) - \mathbf{Q}_{k}(x_{a,2})+\mathbf{Q}_K(x_{a,2}),
 \end{align*}
$\Delta^2_b\ln\mathbf{\cp}_k$ and $\Delta^2_b\mathbf{Q}_k$ are analogously defined, 
\begin{align*}
\mathbf{A}(\beta,\delta)&\equiv \mathbf{I}-\delta\left[\beta\mathbf{Q}_K+(1-\beta)\sum_{k\in{\cal D}}\mathbf{\CP}_k{\mathbf{Q}_k}\right], 
\end{align*}
and $\mathbf{m}\equiv-\ln\mathbf{\cp}_K$. As we noted before, the inverse of the $J\times J$ matrix $\mathbf{A}(\beta,\delta)$ exists, for all $\beta$ and $\delta$ in their domains. Using that $A(\beta,\delta)^{-1}=|\mathbf{A}(\beta,\delta)|^{-1}\overline{\mathbf{A}}(\beta,\delta)$, we can rewrite \eqref{eq:systemofmomentsStatratio} into
\begin{align}
|\mathbf{A}(\beta,\delta)|\Delta^2_a\ln\mathbf{\cp}_k &=\beta \delta \Delta^2_a\mathbf{Q}_k\overline{\mathbf{A}}(\beta,\delta)\mathbf{m}~~~\text{and}\label{eq:systemofmomentsStatA}\\
|\mathbf{A}(\beta,\delta)|\Delta^2_b\ln\mathbf{\cp}_k &=\beta \delta \Delta^2_b\mathbf{Q}_k\overline{\mathbf{A}}(\beta,\delta)\mathbf{m}.\label{eq:systemofmomentsStatB}
\end{align}
Here, the determinant $|\mathbf{A}(\beta,\delta)|$ is a polynomial of degree $J$ in both $\beta$ and $\delta$ (for a total degree of $2 J$) and the adjugate matrix $\overline{\mathbf{A}}(\beta,\delta)$ contains the cofactors of  $\mathbf{A}(\beta,\delta)$, which are polynomials of degree $J-1$ in both $\beta$ and $\delta$. Consequently, the left and right hand sides of \eqref{eq:systemofmomentsStatA} and \eqref{eq:systemofmomentsStatB} are polynomials of degree $J$ in both $\beta$ and $\delta$. Their coefficients depend on the observed choice and state transition probabilities only. So, under an assumption that excludes pathologies, we can mimic the proof of Theorem \ref{th:theorem1} to bound the cardinality of the identified set.
\begin{theorem}[{\bf Identified set}]\label{th:theoremStat}
Suppose that the exclusion restrictions in \eqref{eq:exclusionrestrictionsStatA} and \eqref{eq:exclusionrestrictionsStatB} hold and that the polynomial differences between the left and right hand sides of \eqref{eq:systemofmomentsA} and \eqref{eq:systemofmomentsB} are nonzero and have no common factors. Then the identified set contains at most $J^2$ elements. 
\end{theorem}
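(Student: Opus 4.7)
The approach mirrors the proof of Theorem \ref{th:theorem1}: reparametrize to $(\gamma,\delta)$ with $\gamma\equiv\beta\delta$, bound the total degree of each moment polynomial in the new coordinates by $J$, and apply B\'ezout's theorem. A direct bi-degree count in the original $(\beta,\delta)$ coordinates would only yield looser bounds like $(2J)^2$ or $2J^2$; the sharp $J^2$ relies on the observation that $\beta$ enters the relevant expressions only through the product $\beta\delta$, so collecting terms in $(\gamma,\delta)$ reduces their total degree.

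Concretely, rewrite
\begin{align*}
\mathbf{A}(\beta,\delta) = \mathbf{I} - \delta\bar{\mathbf{Q}} - \beta\delta(\mathbf{Q}_K - \bar{\mathbf{Q}}),\quad\text{with}\quad \bar{\mathbf{Q}}\equiv\sum_{k\in{\cal D}}\mathbf{\CP}_k\mathbf{Q}_k,
\end{align*}
and substitute $\gamma=\beta\delta$ to obtain $\mathbf{A}(\gamma,\delta)=\mathbf{I}-\delta\bar{\mathbf{Q}}-\gamma(\mathbf{Q}_K-\bar{\mathbf{Q}})$, each of whose entries is affine in $(\gamma,\delta)$. Consequently $|\mathbf{A}(\gamma,\delta)|$, a signed sum of products of $J$ such affine entries, has total degree at most $J$, while each cofactor of $\overline{\mathbf{A}}(\gamma,\delta)$ is the determinant of a $(J-1)\times(J-1)$ submatrix and has total degree at most $J-1$. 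Because the outer prefactor $\beta\delta$ on the right-hand sides of \eqref{eq:systemofmomentsStatA}--\eqref{eq:systemofmomentsStatB} is simply $\gamma$, the two polynomial differences
\begin{align*}
f_a(\gamma,\delta) &\equiv |\mathbf{A}(\gamma,\delta)|\,\Delta^2_a\ln\mathbf{\cp}_k - \gamma\,\Delta^2_a\mathbf{Q}_k\,\overline{\mathbf{A}}(\gamma,\delta)\,\mathbf{m},\\
f_b(\gamma,\delta) &\equiv |\mathbf{A}(\gamma,\delta)|\,\Delta^2_b\ln\mathbf{\cp}_k - \gamma\,\Delta^2_b\mathbf{Q}_k\,\overline{\mathbf{A}}(\gamma,\delta)\,\mathbf{m}
\end{align*}
are polynomials in $(\gamma,\delta)$ of total degree at most $J$, with coefficients determined entirely by the observed choice and state transition probabilities.

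Applying B\'ezout's theorem to the system $\{f_a=0,\,f_b=0\}$ under the no-common-factor hypothesis bounds the number of common zeros in $\mathbb{C}^2$ by $J\cdot J=J^2$. The change of variables $(\beta,\delta)\mapsto(\gamma,\delta)=(\beta\delta,\delta)$ is injective on the admissible domain $\beta\in(0,1]$, $\delta\in(0,1)$, with inverse $\beta=\gamma/\delta$, so each distinct $(\beta,\delta)$ in the identified set corresponds to a distinct common root $(\gamma,\delta)$ of $f_a$ and $f_b$, delivering the $J^2$ bound.

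The main point to check is that the no-common-factor hypothesis, which the theorem states for the polynomials as written in $(\beta,\delta)$, transfers to the reparametrized polynomials in $(\gamma,\delta)$ on which B\'ezout is applied. Since the pullback $\gamma\mapsto\beta\delta$, $\delta\mapsto\delta$ is an injective ring homomorphism $\mathbb{C}[\gamma,\delta]\to\mathbb{C}[\beta,\delta]$ that sends non-constants to non-constants, any non-trivial common factor of $f_a$ and $f_b$ in $(\gamma,\delta)$ would pull back to a non-trivial common factor in $(\beta,\delta)$; the stated hypothesis therefore implies coprimality in the coordinates used for B\'ezout, and the rest is degree bookkeeping exactly as in Theorem \ref{th:theorem1}.
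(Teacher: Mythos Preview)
Your proof is correct and follows the paper's strategy of reducing to B\'ezout's theorem, but you are more explicit---and in fact more rigorous---about the degree count than the paper itself. The paper's proof is only the sentence ``we can mimic the proof of Theorem~\ref{th:theorem1},'' relying on the preceding remark that both sides of \eqref{eq:systemofmomentsStatA}--\eqref{eq:systemofmomentsStatB} are ``polynomials of degree $J$ in both $\beta$ and $\delta$.'' Yet the paper also records that $|\mathbf{A}(\beta,\delta)|$ has \emph{total} degree $2J$, and classical B\'ezout bounds solutions by the product of total degrees; applied directly in $(\beta,\delta)$ this would yield only $(2J)^2=4J^2$. Your reparametrization $(\beta,\delta)\mapsto(\gamma,\delta)$ exploits the fact that $\beta$ enters $\mathbf{A}$ only through the product $\beta\delta$, so that each entry becomes affine and the moment polynomials collapse to total degree $J$, delivering the stated $J^2$ cleanly. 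The paper does use exactly this change of variables in Example~\ref{ex:threeperiod}, so the idea is present, but it is not invoked in the formal proof; your version fills that expository gap. Your ring-homomorphism argument transferring the no-common-factor hypothesis from $(\beta,\delta)$ to $(\gamma,\delta)$ is correct and necessary for the B\'ezout application to go through in the new coordinates.
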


Finally, note that the above analysis also implies identification results for the special case with geometric discounting ($\beta=1$). For given $\beta\in[0,1)$, the difference between the left and right hand sides of \eqref{eq:systemofmomentsStatA} is a $J$'th degree polynomial in $\delta$, with coefficients that are known from the data. Provided that this polynomial is nonzero, by the fundamental theorem of algebra, at most $J$ discount factors $\delta$ satisfy \eqref{eq:systemofmomentsStatA} for given $\beta$. So, under exclusion restriction \eqref{eq:exclusionrestrictionsStatA} and assumptions that ensure that the implied polynomial is nonzero, the identified set of the special case with geometric discounting contains at most $J$ elements. This sharpens \cites{qe20:abbringdaljord} Theorem 1 and Corollary 1, which only established that the identified set is discrete, and finite if discount factors near 1 are excluded. For future reference, we explicitly state this result as\footnote{Note that \citet{qe20:abbringdaljord} denote the geometric discount factor with $\beta$ instead. Also, \citeauthor{qe20:abbringdaljord} allowed more generally for an exclusion restriction across either states, or choices, or both. Our analysis here directly extends to such a more general exclusion restriction. We keep this implicit for notational simplicity.}
\begin{theorem}[{\bf Geometric special case}]\label{th:geometric}
Suppose that $\beta=1$, exclusion restriction \eqref{eq:exclusionrestrictionsStatA} holds, and either $\Delta^2_a\ln\mathbf{\cp}_k\neq 0$ or $\Delta^2_a\mathbf{Q}_k\mathbf{m}\neq 0$. Then the identified set contains at  most $J$ elements.
\end{theorem}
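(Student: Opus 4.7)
The plan is to make rigorous the one-paragraph sketch that immediately precedes the statement, by identifying precise sufficient conditions for the relevant univariate polynomial in $\delta$ to be nonzero. First, I would specialize \eqref{eq:systemofmomentsStatA} to $\beta=1$, observing that $\mathbf{A}(1,\delta)=\mathbf{I}-\delta\mathbf{Q}_K$, so that $|\mathbf{A}(1,\delta)|$ is a polynomial in $\delta$ of degree $J$ (the characteristic polynomial of $\mathbf{Q}_K$) and each entry of the adjugate $\overline{\mathbf{A}}(1,\delta)$ is a polynomial in $\delta$ of degree at most $J-1$. Consequently, the difference between the two sides of \eqref{eq:systemofmomentsStatA},
\[
p(\delta)\;\equiv\;|\mathbf{A}(1,\delta)|\,\Delta^2_a\ln\mathbf{\cp}_k \;-\; \delta\,\Delta^2_a\mathbf{Q}_k\,\overline{\mathbf{A}}(1,\delta)\,\mathbf{m},
\]
is a single-variable polynomial in $\delta$ of degree at most $J$, whose coefficients depend only on the observed choice and state transition probabilities.

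The hard part is to rule out $p\equiv 0$ under the stated hypotheses. I would do this by inspecting the two lowest-order coefficients at $\delta=0$. Since $|\mathbf{I}|=1$, direct evaluation gives $p(0)=\Delta^2_a\ln\mathbf{\cp}_k$; and since the adjugate of $\mathbf{I}$ is $\mathbf{I}$, the coefficient of $\delta^{1}$ in the subtracted term is $\Delta^2_a\mathbf{Q}_k\mathbf{m}$. The disjunction in the hypothesis therefore forces either the constant coefficient (first case) or, when that vanishes, the linear coefficient (second case) of $p$ to be nonzero. In either case, $p\not\equiv 0$.

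With $p$ a nonzero polynomial of degree at most $J$ in one variable, the fundamental theorem of algebra bounds the cardinality of its complex root set, and hence of its restriction to $\delta\in[0,1)$, by $J$. Every admissible $\delta$ in the identified set must satisfy $p(\delta)=0$, so the identified set of discount factors has at most $J$ elements; Theorem \ref{th:nonidentStat}, applied to each such $\delta$ together with $\beta=1$, then produces a unique utility function rationalizing the data. The only real obstacle is the coefficient computation above: it is elementary but essential, because it pinpoints why the dichotomy in the hypothesis is exactly the condition needed to rule out a degenerate polynomial. Everything else is bookkeeping and direct invocation of earlier results.
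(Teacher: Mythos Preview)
Your proposal is correct and follows the same approach as the paper: specialize \eqref{eq:systemofmomentsStatA} at $\beta=1$, obtain a univariate polynomial in $\delta$ of degree at most $J$, and apply the fundamental theorem of algebra. The paper states the result after a one-sentence sketch without explicitly checking that the disjunction in the hypothesis forces the polynomial to be nonzero; your coefficient computation at $\delta=0$ (constant term $\Delta^2_a\ln\mathbf{\cp}_k$, and linear term $-\Delta^2_a\mathbf{Q}_k\mathbf{m}$ once the constant vanishes) fills precisely that gap and is the right way to make the argument rigorous.
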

Both Theorems \ref{th:theoremStat} and \ref{th:geometric} rely on the fact that ${\cal X}$ is finite. In contrast, we conjecture that, under some regularity conditions, \cites{qe20:abbringdaljord} Theorem 1 and Corollary 1 extend to continuous state variables. See \cite{blevins2014} for a relevant approach to this case.

\section{Empirical application}
\label{s:margarine}
Brand loyalty has been studied in static choice models since the 1980's. \cite{daljorddubekong2020} notes notes that in these models, a positive coefficient on repeated brand purchases has been interpreted as evidence of brand loyalty. If we leave aside the confounding of persistent heterogeneity and state dependence in these models and accept this interpretation of persistence in brand choice, it still does not explain why consumers are brand loyal. 

\cite{farrellklemperer07} considers brand loyalty as a classic case of a switching cost that creates a lock-in. Switching costs may be pecuniary, such as terminating a mobile subscription plan, or psychological, such as breaking a habitual brand choice. Using a \cite{beckermurphy88} rational addiction like argument, \cite{gordonsun15} argues that if consumers are aware that their brand choice creates a lock-in, then forward-looking consumers would take the lock-in into account when choosing among brands. In a consumer packaged goods context, a forward-looking consumer may be wary of choosing a brand with a high average price and little price variation. \cite{daljorddubekong2020} tested for forward looking behaviour in a dynamic choice model with brand loyalty and geometric discounting. It found a value of $\delta = 0.41$ on a weekly level, statistically significant, but economically small. In this application, we extend \cite{daljorddubekong2020}'s analysis to allow for present-biased preferences using the identification argument of the previous section applied to the same data.  

The data are from the margarine household purchase panel of Nielsen-Kilts HMS that was used in \cite{dubeetal2010}. 
[Description and summary statistics of proprietary data suppressed.]

In each period, the household makes a discrete choice $d_{i,t} \in \mathcal{D}$. The choice set has four brands, indexed 1 to 4, and the reference choice $K = 5$ is purchasing some spreadable good other than margarine. The state variables are  prices $\bm{p} = [p_1, \hdots, p_4] \in \mathbb{R}^4_+$ and brand loyalty $l \in \mathcal{D}\backslash \{K\}$. The prices are assumed to follow a first-order Markov process that is exogenous to the household's choices. The brand loyalty state $l$ is controlled by the choice and is equal to the last purchase of the inside goods.
\begin{align}
l_{t+1} = 
\begin{cases}
d_{t}  & \text{ if } d_{t} \neq K \\
l_{t} & \text{ otherwise}.
\end{cases}
\end{align}
The utilities are
\begin{align}\label{eq:margarineutility}
u_{k}(x_t) & = 
\begin{cases}
u_{k}(p_{k,t}, l_{t}) & \text{ if } k \neq K \\
0 & \text{ otherwise.} \\
\end{cases}
\end{align}
This utility function allows for brand loyalty for some products (positive state dependence) and variety seeking in other products (negative state dependence).  This specification is more flexible than the parametric counterpart in \cite{dubeetal2010} which uses $\mathbbm{1}(k = l)$ as the loyalty state variable and a parametric form that restricts state dependence to be either positive or negative for all products. Substituting in the utilities in \eqref{eq:margarineutility}, the current choice specific value function has the structure of \eqref{eq:wStatscal} and \eqref{eq:perceivedvalueStatscal}, with choice probabilities given by \eqref{eq:ccpStat}.

\cite{daljorddubekong2020} observes that in the standard dynamic differentiated products model, the utility of purchasing one of the inside goods excludes the prices of rival products, such that
 \begin{align}
 u_k(p_k, \bm{p}_{-k}, l) = u_k(p_k,  l).
\end{align}
A pair of states $x_{1} = [p_k, \bm{p}_{-k}, l]$  and $x_{2}=[p_k, \bm{p}'_{-k}, l]$  for which the own price and brand loyalty state are constant, but where at least one rivals's product price shifts ($\bm{p}_{-k} \neq \bm{p}'_{-k}$) satisfies the exclusion restrictions in \eqref{eq:exclusionrestrictionsStatA}  or \eqref{eq:exclusionrestrictionsStatB} for choice $k$. The discount factor is therefore identified in these models if there are at least two such pairs of states that satisfy the regularity conditions. A theoretical virtue of this identification strategy is that it is always available in the discrete choice models used to analyze CPG demand without further assumptions. 

The regularity conditions in Theorem \ref{th:theoremStat} require that the moment condition corresponding to a given exclusion restriction is non-zero. A necessary condition for a non-zero right hand side of either \eqref{eq:systemofmomentsStatA} or \eqref{eq:systemofmomentsStatB} is
\begin{align}\label{eq:nonzerocondition}
\Delta^2_a\mathbf{Q}_k&\equiv\mathbf{Q}_{k}(x_{1})-\mathbf{Q}_K(x_{1}) - \mathbf{Q}_{k}(x_{2})+\mathbf{Q}_K(x_{2}) \neq 0.
\end{align}
Since prices are assumed to evolve independently of household choices, then unless choice $k$ shifts the next period's loyalty state $l$, it follows that
\begin{align}
\mathbf{Q}_{k}(x_{1})-\mathbf{Q}_K(x_{1})  &= \mathbf{Q}_{k}(\bm{p}_{1}, l_{1}) -\mathbf{Q}_K(\bm{p}_{1}, l_{1}) = 0\\
\mathbf{Q}_{k}(x_{2})-\mathbf{Q}_K(x_{2})  &= \mathbf{Q}_{k}(\bm{p}_{2}, l_{2}) -\mathbf{Q}_K(\bm{p}_{2}, l_{2}) = 0.
\end{align}
So if $k = l$, the non-zero condition in  \eqref{eq:nonzerocondition} fails and the corresponding moment condition is not informative, though the exclusion restriction holds. We  define the set of pairs of states that satisfy necessary conditions for identification
\begin{align}
\mathcal{X}^{id}=\left\{ \left\{ x_1,x_2\right\} \in\mathcal{X}^{2}:p_{k,1}=p_{k,2}\,, \,\bm{p}_{-k,1}\neq\bm{p'}_{-k, 2},l_1 = l_2 \equiv l, l\neq k \right\} \,\text{ for }~\,k\in\mathcal{D}\backslash\left\{ 0\right\} .\label{eq:exclusionset}
\end{align}
As in \cite{daljorddubekong2020}, the prices of the four brands are discretized in 30 bins. Along with four brand loyalty states, this gives a total of 120 states.  We estimate the parameters $\beta$ and $\delta$ from the moment conditions that can be generated from $\mathcal{X}^{id}$ by minimum distance with quadratic loss. The parameters $\beta$ and $\delta$ are identified if  $\mathcal{X}^{id}$ contains at least two unique pairs. In this application, we have 3400. 

The geometric discount factor was estimated without restrictions its domain and replicates the result in \cite{daljorddubekong2020}. 
[Estimation results of the hyperbolic model on proprietary data suppressed.]

\section{Concluding remarks}

The previous section showed that the exclusion restriction approach that identifies geometric time preferences in dynamic discrete choice models formally extends to present-biased time preferences. The identification argument for the geometric case is intuitive. The moment condition derived from the exclusion restriction captures the choice response to variation in future values for a fixed current utility. The intuition behind the identification of present-bias is less obvious. 

The defining feature of present-bias that the identification strategy must capture is preference reversals. A well-known example of preference reversals is Thaler's apples: while most people prefer an apple today to two apples tomorrow, the same people also prefer two apples one year and one day from now to one apple one year from now. Such observed choice contrasts are direct and intuitive evidence of preference reversals that are commonly used in lab studies of time preferences, see e.g. \cite{marziliericsonetal15}. 

Inferring preference reversals from observed choices in dynamic discrete choice models requires utility restrictions. For instance, Thaler's example can be rationalized with non-stationary utilities without invoking preference reversals. Identification of hyperbolic discount functions using exclusion restrictions on utilities relies on evidence of preference reversals similar to Thaler's example, but in a less transparent way. Each moment condition represents the difference in values of a particular choice contrasts that involves a sequence of expected choices. If we fix $\beta = 1$, then $\delta$ is determined from one of the moment conditions. Once $\delta$ is determined,  $\mathbf{u}$ is uniquely determined. For a particular set of $\delta$ and $\mathbf{u}$, we can test if these preferences are consistent with the second moment condition, which then has no free parameters. If the second moment condition is not satisfied at $\beta = 1$, it is because we have a preference reversal: a $\delta$ and a $\mathbf{u}$ that rationalizes the choice contrast in the first moment conditions is inconsistent with the revealed preference for the choice contrast in the second moment condition. By allowing $\beta \in (0,1)$ and $\delta$ to simultaneously determine both moment conditions, the observed preference reversal informs the present bias parameter.


\clearpage
\appendix
\counterwithin{figure}{section}
\pdfbookmark[0]{References}{pdfbm:refs}
\bibliographystyle{chicago}
\bibliography{alljaap2}

\section{Identification and inference in a three period model}
In this example, we assume binary choice. We set $t_{a,1} = t_{a,2} = t_a$ and $t_{b,1} = t_{b,2} = t_b$ and assume the exclusion restrictions
\begin{align}
u_{1,t_a}(x_{a,1}) = &u_{1,t_a}(x_{a,2}) \\
u_{1,t_b}(x_{b,1}) = &u_{1,t_b}(x_{b,2})
\end{align}
The two exclusion restrictions lead to the two moment conditions 
\begin{align}
		&\ln\left(\frac{\cp_{1,t_a}(x_{a,1})}{\cp_{2,t_a}(x_{a,1})}\right) - \ln\left(\frac{\cp_{1,t_a}(x_{a,2})}{\cp_{2,t_a}(x_{a,2})}\right) = \nonumber \\
		 & \beta \delta \left[\mathbf{Q}_{1}(x_{a,1}) - \mathbf{Q}_K(x_{a,1})
		 -\mathbf{Q}_{1}(x_{a,2}) + \mathbf{Q}_K(x_{a,2})\right]
		 \left[\mathbf{m}_{t_a+1} + \delta \mathbf{Q}^{pb}_{t+1}\mathbf{v}_{t_a+2}  \right], \\
&\ln\left(\frac{\cp_{1,t_b}(x_{b,1})}{\cp_{2,t_b}(x_{b,1})}\right) - \ln\left(\frac{\cp_{1,t_b}(x_{b,2})}{\cp_{2,t_b}(x_{b,2})} \right)  =  \nonumber\\
		& \beta \delta \left[\mathbf{Q}_{1}(x_{b,1})  - \mathbf{Q}_K(x_{b,1})
		 -\mathbf{Q}_{1}(x_{b,2})+ \mathbf{Q}_k(x_{b,2})\right]\left[\mathbf{m}_{t_b+1} + \delta \mathbf{Q}^{pb}_{t_b+1}\mathbf{v}_{t_b+2} \right] .
\end{align}

\subsection*{Period $T-1$} 
We first show that present-biased discount functions can not be identified from only two periods of observed choices and states. 
Let $t_a = t_b = T-1$, then define  
\begin{align*}
\Delta \mathbf{Q}_1(x_a) & = \left[\mathbf{Q}_1(x_{a,1}) - \mathbf{Q}_K(x_{a,1})- \mathbf{Q}_1(x_{a,2}) + \mathbf{Q}_K(x_{a,2})\right],
\end{align*}
and analogously for $\Delta \mathbf{Q}_1(x_b)$. Next, define 
\begin{align*}
\Delta \ln(\cp_{1, T-1}(x_a)) & = \ln\left(\frac{\cp_{1,T-1}(x_{a,1})}{\cp_{2,T-1}(x_{a,1})}\right) - \ln\left(\frac{\cp_{1,T-1}( x_{a,2})}{\cp_{2,T-1}(x_{a,2})}\right).
\end{align*}
The moment conditions in \eqref{eq:systemofmomentsA} can now be written
\begin{align}
		\Delta \ln(\cp_{1, T-1}(x_a)) & =  \beta \delta \Delta \mathbf{Q}_1(x_a) \mathbf{m}_{T}  \\
		\Delta \ln(\cp_{1, T-1}(x_b)) &=  \beta \delta \Delta \mathbf{Q}_1(x_b) \mathbf{m}_{T} 
\end{align}
The two polynomials are clearly linearly dependent. This also an example of a common factor. Since the parameters $\beta$ and $\delta$ are interchangeable in both moment conditions, they can not be separately identified with only two periods of data. Their product is however point identified. 

\subsection*{Period $T-2$}
With three periods of data, the discount function parameters are formally set identified. 
Let $t_a = t_b = T-2$. The moment conditions are
\begin{align*}
		\Delta \ln(\cp_{1, T-2}(x_a)) & =   \beta \delta  \Delta \mathbf{Q}_1(x_a) \left[\mathbf{m}_{T-1} + \delta\mathbf{Q}^{pb}_{T-1}\mathbf{m}_{T} \right]   \\
				\Delta \ln(\cp_{1, T-2}(x_b)) & =   \beta \delta  \Delta \mathbf{Q}_1(x_b) \left[\mathbf{m}_{T-1} + \delta\mathbf{Q}^{pb}_{T-1}\mathbf{m}_{T} \right]  
\end{align*}
Writing out the terms, we get
\begin{align}
		 \Delta \ln(\cp_{1, T-2}(x_a)) = & \beta \delta \Delta \mathbf{Q}_1(x_a)\mathbf{m}_{T-1}  + \beta \delta^2 \Delta \mathbf{Q}_1(x_a)\overline{\mathbf{Q}}_{T-1}\mathbf{m}_{T} \nonumber + \\
		 &  \beta^2 \delta^2 \Delta \mathbf{Q}_1(x_a)\left[ \overline{\mathbf{Q}}_{T-1}- \mathbf{Q}_2 \right]\mathbf{m}_T    \\
		 \Delta \ln(\cp_{1, T-2}(x_b)) = & \beta \delta \Delta \mathbf{Q}_1(x_b)\mathbf{m}_{T-1}  + \beta \delta^2 \Delta \mathbf{Q}_1(x_b)\overline{\mathbf{Q}}_{T-1}\mathbf{m}_{T} + \nonumber \\
		 & \beta^2 \delta^2 \Delta \mathbf{Q}_1(x_b)\left[ \overline{\mathbf{Q}}_{T-1}- \mathbf{Q}_2 \right]\mathbf{m}_T 
\end{align}
We first note that the only term for which $\beta$ and $\delta$ are not interchangeable in period $T-2$ is $\beta \delta^2 \Delta \mathbf{Q}_1(x_a)\overline{\mathbf{Q}}_{T-1}\mathbf{m}_T$. The set identification of $\beta$ and $\delta$ therefore relies on a higher order interaction term. These terms are furthermore likely to be highly correlated in finite samples which suggests that precise estimation of the two parameters separately may be hard to achieve. We illustrate this point with a simulation below.

\subsection{Estimation routine}
We estimate $\beta$ and $\delta$ from  the sample counterparts to the moment conditions in \eqref{eq:systemofmomentsA} and \eqref{eq:systemofmomentsB} by minimum distance. Holding the choice fixed at some $k \in \mathcal{D}\backslash \{K\}$,  a pair of periods $t$ and $t'$ and a pair of states $x_{1}$ and $x_{2}$ give the exclusion restriction $u_t(x_1) = u_{t'}(x_2)$. The corresponding moment is 
\begin{align}\label{eq:criterionmoment}
\mathbf{\psi}(\beta, \delta; x, t) = & \frac{\cp_{k,t}(x_{1})}{\cp_{K,t}(x_{1})} - \frac{\cp_{k,t'}(x_{2})}{\cp_{K,t'}(x_{2})}- \nonumber\\
&\beta\delta \left([\mathbf{Q}_k(x_{1}) - \mathbf{Q}_K(x_{2})] \mathbf{v}_{t+1} - [\mathbf{Q}_k(x_{2}) - \mathbf{Q}_K(x_{2})] \mathbf{v}_{t'+1} \right) 
\end{align}
where $\mathbf{v}_{t} = \mathbf{m}_t + \delta \mathbf{Q}^{pb}_t \mathbf{v}_{t+1}$.  We denote the vector of moments which has one element for each exclusion restriction  $\mathbf{\psi}(\beta, \delta;.,.)$. The minimum distance criterion is
\begin{align*}
S(\beta, \delta) & = \mathbf{\psi} \mathbf{W} \mathbf{\psi}'
\end{align*}
for a weight matrix $\mathbf{W}$. The gradient and the Hessian of the criterion function are given in the appendix.

\def\N{ 1000000}
\def\J{     6}
\def\T{     3}
\def\II{   100}
\def\betatrue{0.80}
\def\deltatrue{0.50}
\def\utrue{\left[\begin{array}{ccc}  1.00 &  -1.00 &   1.00 \\   1.00 &   2.00 &   1.00\\   1.00 &   2.00 &   4.00\\   1.00 &  -1.00 &   4.00\\  4.00 &   2.00 &   1.00\\  1.00 &   5.00 &   3.00\end{array}\right]}
\def\uoneexcluded{  1.00}
\def\Qone{\left[\begin{array}{cccccc}  0.19 &   0.22 &   0.06 &   0.28 &   0.06 &   0.19 \\  0.11 &   0.32 &   0.07 &   0.11 &   0.14 &   0.25 \\  0.28 &   0.11 &   0.17 &   0.28 &   0.06 &   0.11\\  0.21 &   0.14 &   0.24 &   0.24 &   0.07 &   0.10\\  0.03 &   0.24 &   0.24 &   0.24 &   0.22 &   0.03\\  0.10 &   0.14 &   0.10 &   0.19 &   0.05 &   0.43 \end{array}\right]}
\def\Qtwo{\left[\begin{array}{cccccc}  0.25 &   0.19 &   0.12 &   0.12 &   0.12 &   0.19 \\  0.08 &   0.08 &   0.31 &   0.15 &   0.23 &   0.15 \\  0.27 &   0.07 &   0.27 &   0.07 &   0.20 &   0.13\\  0.23 &   0.23 &   0.31 &   0.08 &   0.08 &   0.08\\  0.19 &   0.25 &   0.12 &   0.06 &   0.25 &   0.12\\  0.19 &   0.12 &   0.19 &   0.19 &   0.25 &   0.06 \end{array}\right]}
\def\betahattruedata{0.80}
\def\deltahattruedata{0.50}
\def\betahatsamplingvariation{0.86}
\def\deltahatsamplingvariation{0.44}
\def\deltahatgeometricsamplingvariation{0.40}

\subsection{Simulation}
We set the number of states $J = \J$, for $T = \T$, and draw data for $N = \N$ agents. The discount parameters are set $\beta = \betatrue$ and $\delta = \deltatrue$. The exclusion restrictions $u_{1,1}(x_1) = u_{1,1}(x_2) = \uoneexcluded$ are imposed in estimation. The utilities are 
\begin{align*}
\mathbf{u}_1 & = \utrue
\end{align*}
and the transitions are drawn randomly from the true transitions
\begin{align*}
\mathbf{Q}_1= \Qone \\
\mathbf{Q}_2 = \Qtwo
\end{align*}
We first confirm that $\beta$ and $\delta$ are identified. We use the true choice probabilities and true transition distributions to recover $\beta$ and $\delta$ up to numerical precision at $\hat{\beta} = \betahattruedata$ and $\hat{\delta} = \deltahattruedata$.
\\
\\
Figure \ref{fig:criterionbetadelta} plots the criterion for $\beta$ and $\delta$, holding $\delta$ and $\beta$, respectively, at their true values, using the true choice data. The plot shows no clear basin around the minimum, but instead a banana shaped trough. The trough points to issues of inference in finite samples. A similar observation was made in \cite{laibsonetal2007} for  a lifecycle consumption model with $\beta \delta$ preferences and continuous choices, see its Figure 1. 
\begin{figure}[h] 
   \centering
   \includegraphics[width=10cm]{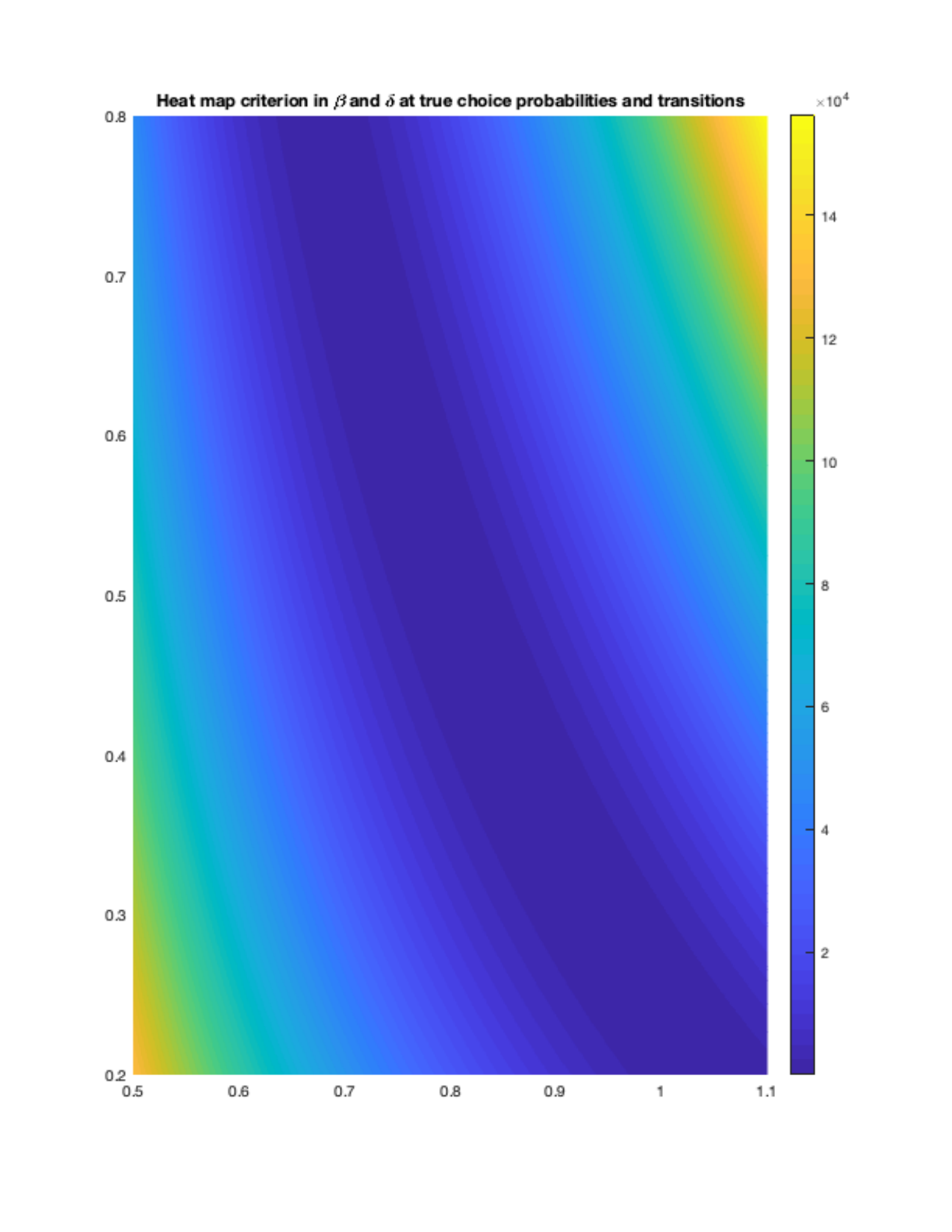} 
   \caption{\footnotesize  Heat map of the criterion function for the hyperbolic model using true choice data (no sampling variation). }
      \label{fig:criterionbetadelta}
\end{figure}
\\
\\
We next use choice data with sampling variation. In Figure \ref{fig:jointdistributionestimates}, we plot $\beta$ and $\delta$ estimates from $\II$ data sets drawn from the same DGP. The estimates are seen to lie along a hyperbole that is implied by the product of their true values $\beta  = \frac{\betatrue* \deltatrue}{\delta}$, similar to the trough in the heat map in Figure \ref{fig:criterionbetadelta}. The scatterplot shows that though the parameters are imprecisely estimated separately (the swarm of points stretch along the hyperbole), the products of the parameters are relatively more precisely recovered (the variation around the hyperbole). This points to a practical difficulty in recovering hyperbolic discount function parameters precisely in observational data using our exclusion restrictions. 
\begin{figure}[h] 
   \centering
   \includegraphics[width=10cm]{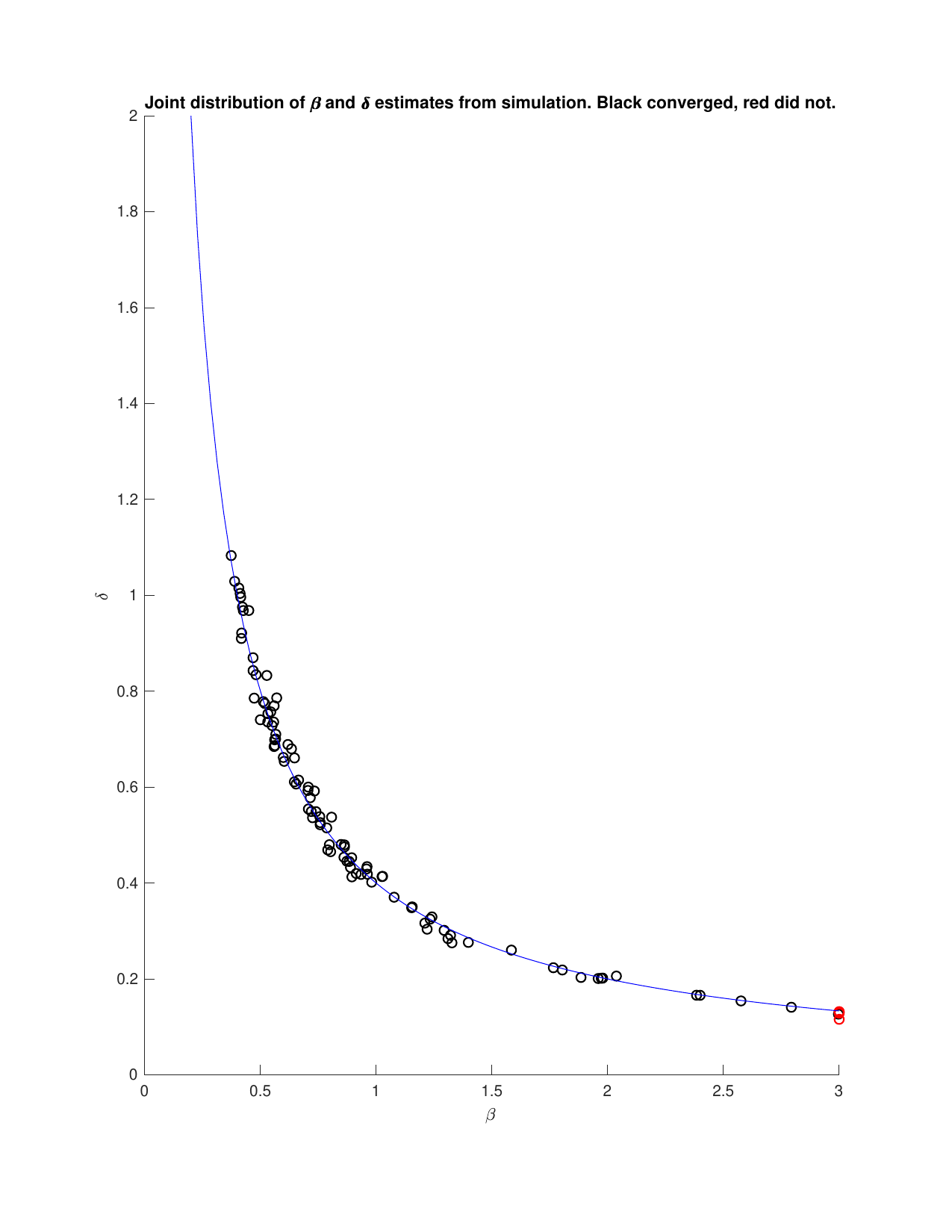} 
   \caption{\footnotesize Estimates of $\beta$ and $\delta$ from data with sampling variation. }
      \label{fig:jointdistributionestimates}
\end{figure}
Finally, we estimate an exponential discount function using data generated by a DGP with $\beta = \betatrue$ and $\delta = \deltatrue$. We expect the estimate of $\delta$ to be close to $\betatrue \times \deltatrue$ and precisely estimated. The estimate is $\deltahatgeometricsamplingvariation$. The criterion is given in Figure \ref{fig:criteriondelta}.
\begin{figure}[h] 
   \centering
   \includegraphics[width=10cm]{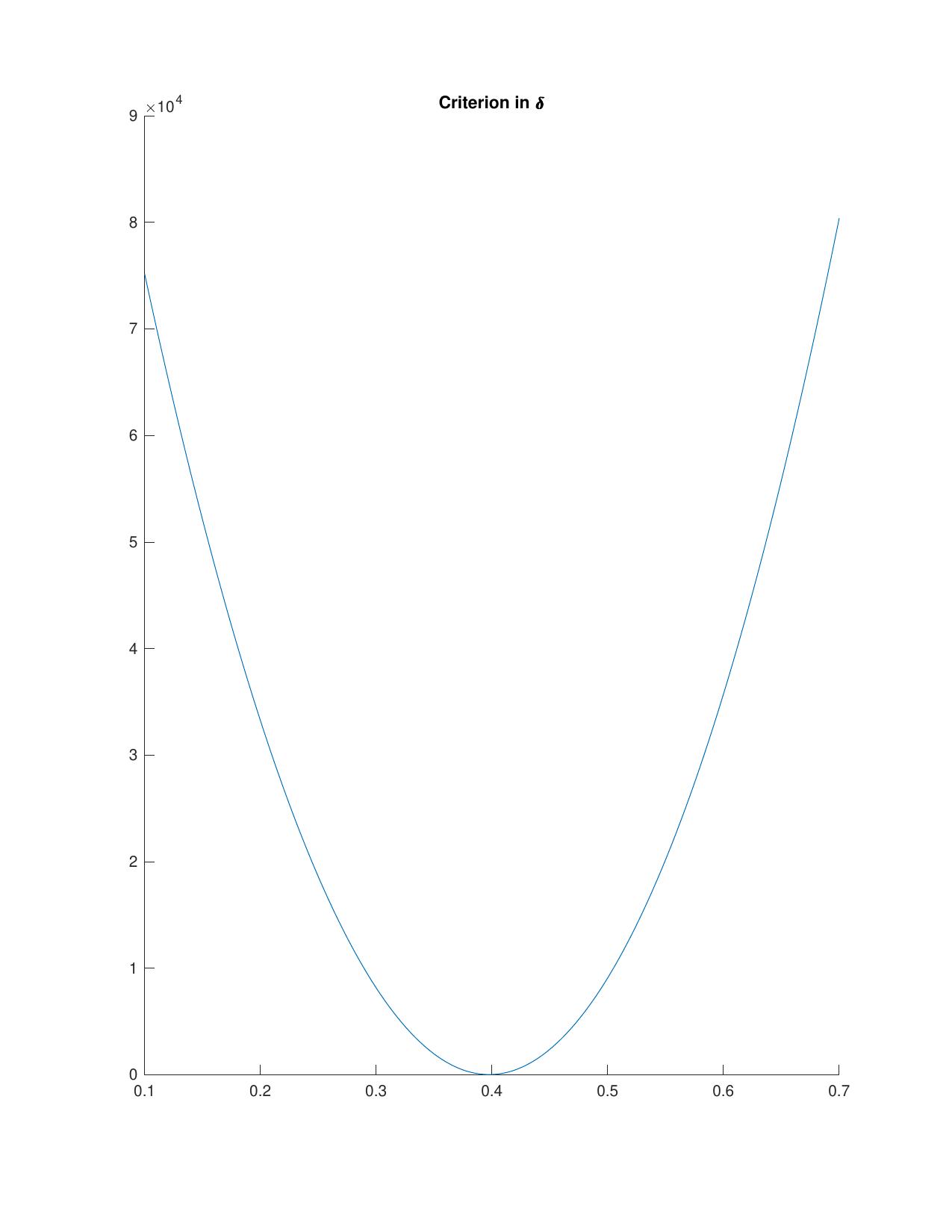} 
   \caption{\footnotesize  Plot of the criterion function for the geometric model using choice data with sampling variation.}
      \label{fig:criteriondelta}
\end{figure}

\end{document}